\let\csname equation*\endcsname\relax
\let\csname endequation*\endcsname\relax
\crefname{figure}{Figure}{Figures}
\crefname{corollary}{Corollary}{Corollary}
\crefname{conjecture}{Conjecture}{Conjectures}
\crefname{section}{Section}{Sections}
\crefname{appendix}{Appendix}{Appendixes}
\crefname{observation}{Observation}{Observation}
\crefname{remark}{Remark}{Remark}
\crefname{example}{Example}{Examples}
\crefname{equation}{Eq.}{Eqs.}
\crefname{table}{Table}{Tables}
\newtheorem{proposition}{Proposition}
\begin{document}

\title{ $\chi$-Colorable Graph States: Closed-Form Expressions and Quantum Orthogonal Arrays}

 \author{Konstantinos-Rafail Revis}
 \author{Hrachya Zakaryan}
 \author{Zahra Raissi} 
\affiliation{
Department of Computer Science, Paderborn University, Warburger Str. 100, 33098, Paderborn, Germany \\
Institute for Photonic Quantum Systems (PhoQS), Paderborn University, Warburger Str. 100, 33098 Paderborn, Germany
}

\begin{abstract}

Graph states are a fundamental class of multipartite entangled quantum states with wide-ranging applications in quantum information and computation. In this work, we develop a systematic framework for constructing and analyzing $\chi$-colorable graph states, deriving explicit closed-form expressions for arbitrary $\chi$. For two- and a broad family of three-colorable graph states, the representations obtained using only local operations, require a minimal number of terms in the Z-eigenbasis. We prove that every two-colorable graph state is local Clifford (LC) equivalent to a state expressible as a summation of rows of an orthogonal array (OA), providing a structured approach to writing highly entangled states. For graph states with $\chi > 2$, we show that they are LC-equivalent to quantum orthogonal arrays (QOAs), establishing a direct combinatorial connection between multipartite entanglement and structured quantum states.
Additionally, the derived closed-form expression has the minimal Schmidt measure for every two-colorable and a broad family of three-colorable graph states. Furthermore, the upper and lower bounds of the Schmidt measure are also discussed. Our results offer an efficient and practical method for systematically constructing graph states, optimizing their representation in quantum circuits, and identifying structured forms of multipartite entanglement. This framework also connects graph states to $k$-uniform and absolutely maximally entangled (AME) states, motivating further exploration of the structure of entangled states and their applications in quantum networks, quantum error correction, and measurement based quantum computing.

\end{abstract} 
\maketitle

\section{Introduction}

Multipartite entanglement is a fundamental resource in quantum information science \cite{Horodecki_2009}, playing a central role in quantum computing, quantum error correction \cite{intro-QEC-1,Scott-2004,intro-QEC-3, QECCgottesman2009introductionquantumerrorcorrection}, quantum networks \cite{intro-QN-1}, and quantum cryptography \cite{intro-secret-sharing}. Among the many families of entangled quantum states, graph states provide a structured way to encode entanglement, making them particularly useful for quantum error correction and measurement-based quantum computation \cite{gstates_review, Eisert, mb1Raussendorf_2002,Raissi_2020,Scott-2004, Raissi_2018}. Given their connection to stabilizer codes \cite{eisert_schlingemann2001}, it is natural to explore whether the structural properties of graphs can reveal deeper insights into the computational power, entanglement properties, and classification of graph states.

One such structural property is the {\it chromatic number}, which describes the minimum number of colors required to assign to the vertices of a graph such that no two adjacent vertices share the same color \cite{colorabilitydef}. The chromatic number is typically referred to as \textit{colorability}. While colorability is a well-studied problem in classical graph theory, its role in the entanglement structure of graph states remains largely unexplored. Previous works have suggested that colorability places upper and lower bounds on the Schmidt measure of graph states \cite{Eisert} and that almost all two colorable graph states have maximal Schmidt measure \cite{Severini_2006}.

In this work, we develop a systematic framework for deriving closed-form expressions for two-colorable graph states. We begin by obtaining explicit representations for two-colorable graph states, proving that they are LC-equivalent to a summation of rows of an orthogonal array (OA) \cite{OAdefhedayat1999orthogonal,QOA-def1}. This insight provides an efficient way to construct and manipulate highly entangled graph states, revealing a deep connection between the structure of graph states and combinatorial designs. Furthermore, the derived closed-form expressions achieve the lower bound of Schmidt measure given in \cite{Eisert} and thus they admit the minimal number of terms in the $Z$-eigenbasis.

We then extend this framework to three-colorable graph states, proving that they are LC-equivalent to quantum orthogonal arrays (QOAs) \cite{QOA-def1, QOA-def2, zahra_qoa,QOA-def3, QOA-kUNI-1}, and therefore can be useful in the context of their applications \cite{QOA-applications-1, QOA-applications-2}. This result highlights an intrinsic combinatorial structure in multipartite entangled states and provides an intuitive, efficient method for writing highly entangled three-colorable states directly from their graph structure. For the generic case, lower bounds on the Schmidt measure are obtained. Additionally, we analyze a special class of three-colorable graph states that include $k$-uniform and absolutely maximally entangled (AME) states \cite{Burchardt_2020}, showing how additional graph connections modify entanglement properties. For this class of three-colorable states, the Schmidt measure is explicitly determined.

To further explore the implications of colorability, we investigate the relationship between two- and three-colorable graph states under local unitary (LU) and LC operations. While LC operations can alter graph connectivity and consequently the colorability, because of local complementation \cite{BOUCHET199375, Van_den_Nest_2004, Van_den_Nest_2004_alg}, we discuss how we can determine the Schmidt measure of graph states with $\chi>2$. Furthermore, we identify cases where three-colorable graph states cannot be transformed back into their corresponding two-colorable forms using any invertible local operations \cite{zahra_adam_kuni}, highlighting fundamental constraints imposed by graph connectivity on multipartite entanglement.

Finally, we extend our framework to arbitrary $\chi$-colorable graph states, providing closed-form expressions and examining their entanglement structure by obtaining bounds on Schmidt measure. Our approach demonstrates that higher-colorable graph states exhibit entanglement patterns that relate to QOAs, offering insights into multipartite entanglement characterization. While the computational complexity of colorability remains a challenge, our results suggest that it provides a structured way to analyze entangled states, with potential applications in quantum networks and scalable quantum computing architectures.

The remainder of this paper is organized as follows. In Section \ref{label:Basics}, the basics of qudits and graph states are presented. In Section \ref{section:2-col}, the closed-form expression and the implications of the result are presented. In Section \ref{section:3col-gen}, the closed-form expression and the corresponding impacts for the three-colorable graph states are displayed.
Then, in Section \ref{section:3col-special}, we discuss a specialized set of three-colorable graph states designed to describe $k$-uniform states. In Section \ref{section:LT}, the usage of local complementation and the connection of the given framework to the $k$-uniform states is made apparent, indicating the impact of the previously presented results. In Section \ref{section:x-colorable}, we generalize our results to arbitrary $\chi$ colorability, further expanding the scope of our approach. 
Finally, we conclude in Section \ref{section:conclusion} by discussing the potential implications of our work and outlining directions for future research.

\section{Basics of qudit graph states}\label{label:Basics}
Graph states are highly entangled quantum states associated with a mathematical graph $(V, E)$, where vertices $V$ represent qudits, and edges $E$ correspond to entangling operations between them \cite{eisert_g1,eisert_g2}. These states serve as a central resource in quantum information protocols, including measurement-based quantum computation. Here, we introduce the essential formalism for qudit graph states, setting the stage for an exploration of their colorability in the following sections. Two vertices $i,j\in V$, with the total number of vertices denoted as $n$, are called adjacent if they are the endpoints of an edge. 
The adjacency matrix $\Gamma$ is a symmetric $n \times n$ matrix defined over a finite field $ \mathbb{F}_d$, where each nonzero element $\Gamma_{ij}=\theta_{ij} \in \mathbb{F}_d$ specifies the entanglement weight between qudits $i$ and $j$. Therefore \cite{Eisert}:
\begin{equation}
\Gamma_{ij}= 
\begin{cases} 
\theta_{ij} & \text{if } i,j \in E  \\
0 & \text{otherwise } 
\end{cases}\ .
\end{equation}

Mentioning the neighborhood $N_i\subset V$ of a graph $G$ is useful, which is the set of vertices for which $\{i,j\}\in E$. Let us briefly review the concept of qudits, which are systems with $d$ levels, such that the Hilbert space of each particle is $\mathbb{C}^d$. The action of the Pauli operators $Z$ and $X$ on the eigenstates of $Z$ is defined as follows \cite{quditsLCbahramgiri2007graphstatesactionlocal,zahra_qudits}.
\begin{equation}
\label{eq:defxz} 
    X^a\ket{i} = \ket{i+a} \quad \text{and} \quad Z^a\ket{i} = \omega^{ia}\ket{i},
\end{equation}
where $\omega = e^{i 2\pi/d}$ the $d$th root of unity, $a\in\mathbb{F}_d$, and $X^d=Z^d=\mathbb{I}_d$. All additions are performed modulo $d$. 
A critical operation for qudits is the Hadamard gate, whose action is defined as:
\begin{equation}
    \label{eq:hgate}
    H\ket{i} = \frac{1}{\sqrt{d}}\sum_{l=0}^{d-1}\omega^{il}\ket{l},\quad \text{and} \quad H^{\dagger}\ket{i} = \frac{1}{\sqrt{d}}\sum_{l=0}^{d-1}\omega^{-il}\ket{l}.
\end{equation}
Analogous to the qubit case, the controlled-$Z$ ($CZ$)  gate is defined for qudits as: 
\begin{equation}
\label{eq:cz}
    C_1Z_2^{\beta}\ket{i}_1\ket{j}_2 = \omega^{ij\beta}\ket{i}_1\ket{j}_2,
\end{equation}
where $\beta\in\mathbb{F}_d$.  The subscripts in $CZ$ indicate the control (first qudit) and target (second qudit), but since $CZ$ is symmetric under the exchange of qudits, its effect remains unchanged if the indices are swapped. In contrast, for $CX$, the order of the indices is crucial: the first index corresponds to the control qudit, while the second index receives the modular addition transformation,
\begin{equation}
\label{eq:cx}
C_1X_2^{\gamma}\ket{i}_1\ket{j}_2 = \ket{i}_1\ket{i+j+\gamma}_2,
\end{equation}
where $\gamma\in\mathbb{F}_d$. In this case, the order of the indices is significant.

Finally, some useful relations relevant to the upcoming results are presented in \ref{appen:useful}. For a graph $(V,E)$, where $V$ and $E$ denote the sets of vertices ($|V|=n$) a graph state is defined as \cite{Eisert}: 
\begin{equation}
\label{eq:gs-definition}
      \ket{\psi_{\text{GS}}}=\left(\displaystyle\prod_{\{k,j\}\in E}C_{k}Z_{j}^{\Gamma_{kj}}\right)\ket{+}^{\otimes n},\quad \text{where} \ \ \ket{+} = H\ket{0}.
\end{equation}
It should be noted that the normalization factor here is not omitted but, in general, is omitted unless it is essential to the discussion. Here, the subscript GS indicates that the state $\ket{\psi_{\text{GS}}}$ corresponds to the graph representation.

\section{Two-colorable Graph States }\label{section:2-col}

Let $\ket{\psi_{\text{2-color}}}$ be a two-colorable graph state. 
A graph is two-colorable if its chromatic number is $\chi=2$, meaning that its vertices can be partitioned into two disjoint sets, which we label as $B$ (blue vertices) and $R$ (red vertices). The total number of red vertices is $n_R$, and the total number of blue vertices is $n_B$, so the total number of vertices in the graph is $n=n_R + n_B$.  Throughout this section, the notation $r \in R$ refers to a particle in the set of red vertices, and $b \in B$ refers to a particle in the set of blue vertices.

We assume that the  $CZ$ operations in the graph state construction occur between red vertices as controls and blue vertices as targets. The two-colorable graph state $\ket{\psi_{\text{2-color}}}$, is then expressed as:
\begin{align}
\label{eq:2colgsdef}
    \ket{\psi_{\text{2-color}}}=\displaystyle\prod_{r\in R, b \in B}C_rZ_{b}^{\Gamma_{rb}}\ket{+}^{\otimes R}\ket{+}^{\otimes B} \ ,
\end{align}
This expression highlights the structural simplicity of two-colorable graph states, as the adjacency matrix of any two-colorable graph state takes a specific block form
\begin{equation}
\label{eq:gamma-2-col}
\Gamma_{2\text{-color}} = \Gamma_{rb} =  \left(\begin{array}{c|c} \smash{\overbrace{O}^{\text{B}}} & \smash{\overbrace{A_{RB}}^{\text{R}}} \\ \hline A_{RB}^T & O \end{array}\right)\ ,
\end{equation}
where the block $A_{RB}$ is the $n_R\times n_B$ block containing the weights of edges connecting red and blue vertices. The diagonal blocks are zero matrices because vertices of the same color are not connected in two-colorable graphs.

Two-colorable graph states have significant applications in quantum error correction, measurement-based quantum computing, and entanglement quantification \cite{eisert_5,zander2024benchmark}. However, the explicit construction of these states using conventional methods is computationally intensive, particularly for large graphs. A closed-form representation directly derived from the graph's adjacency matrix simplifies the state construction and enables practical implementation of these states in quantum protocols.

In the following, we derive a closed-form representation for two-colorable graph states and explore its implications for their construction, analysis, and applications.

\begin{proposition}\label{thm-2col}
    Assume a two-colorable graph state given by equation (\ref{eq:2colgsdef}), with adjacency matrix (\ref{eq:gamma-2-col}) and without loss of generality $n_B\geq n_R$. The state satisfies the following closed-form expression: 
    \begin{equation}
    \label{eq:closeform2Col1}
    H^{\dagger\otimes B}\ket{\psi_{\text{2-color}}}=\displaystyle\sum_{\vec{i}=0}^{d-1}\ket{\vec{i}\mathcal{G}},
\end{equation}
where $\vec{i}=(i_1,i_2,...,i_{n_R})$ is a row vector, and $\mathcal{G}=\begin{bmatrix} \mathbb{I}_{n_R} & | & A_{RB}\end{bmatrix}$, with $A_{RB}$ being the top-right block of the adjacency matrix.
\end{proposition}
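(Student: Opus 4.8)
The plan is to evaluate the left-hand side by direct expansion in the $Z$-eigenbasis and then collapse the resulting multi-index sum using character orthogonality over $\mathbb{F}_d$. First I would rewrite each $\ket{+}$ via \cref{eq:hgate} as $\tfrac{1}{\sqrt{d}}\sum_{l}\ket{l}$, so that $\ket{+}^{\otimes R}\ket{+}^{\otimes B}$ becomes a uniform superposition $\sum_{\vec{i},\vec{j}}\ket{\vec{i}}_R\ket{\vec{j}}_B$ over all red index vectors $\vec{i}=(i_1,\dots,i_{n_R})$ and blue index vectors $\vec{j}=(j_1,\dots,j_{n_B})$, suppressing normalization per the stated convention.

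Next I would apply the controlled-$Z$ layer. Since each factor acts as in \cref{eq:cz}, the product $\prod_{r\in R,\,b\in B}C_rZ_b^{\Gamma_{rb}}$ multiplies the basis term $\ket{\vec{i}}_R\ket{\vec{j}}_B$ by the phase $\omega^{\sum_{r,b} i_r\Gamma_{rb}j_b}=\omega^{\vec{i}A_{RB}\vec{j}^{\,T}}$, where the off-diagonal block $A_{RB}$ of \cref{eq:gamma-2-col} carries all the edge weights. Because the graph is two-colorable there are no intra-set phases, which is precisely why only $A_{RB}$ enters.

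The decisive step is applying $H^{\dagger\otimes B}$ and summing out the blue indices. Using $H^{\dagger}\ket{j}=\tfrac{1}{\sqrt{d}}\sum_{l}\omega^{-jl}\ket{l}$ from \cref{eq:hgate}, the blue register becomes $\sum_{\vec{j},\vec{l}}\omega^{\vec{i}A_{RB}\vec{j}^{\,T}-\vec{j}\cdot\vec{l}}\ket{\vec{l}}_B$. I would then carry out the sum over $\vec{j}$ coordinatewise and invoke the character-orthogonality identity $\sum_{j=0}^{d-1}\omega^{jc}=d$ when $c\equiv 0 \pmod d$ and $0$ otherwise. This forces the constraint $\vec{l}=\vec{i}A_{RB} \pmod d$ on every blue coordinate and annihilates the entire $\vec{j}$- and $\vec{l}$-summation, leaving only the free red sum $\sum_{\vec{i}}\ket{\vec{i}}_R\ket{\vec{i}A_{RB}}_B$. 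Note that the use of $H^{\dagger}$ rather than $H$ supplies the $-\vec{j}\cdot\vec{l}$ term, so the Kronecker constraint fixes $\vec{l}=+\vec{i}A_{RB}$ with exactly the sign appearing in $\mathcal{G}$. I expect this character-sum collapse, together with the bookkeeping of the three nested multi-indices $\vec{i},\vec{j},\vec{l}$ and the tracking of the normalization prefactors down to the convention-suppressed constant, to be the main technical obstacle.

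Finally I would identify the surviving expression with the claimed form. Writing $\mathcal{G}=\begin{bmatrix}\mathbb{I}_{n_R}&|&A_{RB}\end{bmatrix}$, the matrix-vector product gives $\vec{i}\mathcal{G}=[\,\vec{i}\mid\vec{i}A_{RB}\,]$, so that $\ket{\vec{i}}_R\ket{\vec{i}A_{RB}}_B=\ket{\vec{i}\mathcal{G}}$ and the sum is exactly $\sum_{\vec{i}=0}^{d-1}\ket{\vec{i}\mathcal{G}}$, as asserted. The hypothesis $n_B\ge n_R$ enters only in the choice to apply $H^{\dagger}$ on the larger blue set, so that the free sum runs over the smaller red register and hence over $d^{n_R}$ terms; this is what makes the representation economical and anticipates the later Schmidt-measure statements.
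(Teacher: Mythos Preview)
Your argument is correct, but it follows a different route from the paper's. The paper never expands the blue register; instead it pushes $H^{\dagger\otimes B}$ through the controlled-$Z$ layer using the operator identity $H^{\dagger}C_rZ_b^{\Gamma_{rb}}=C_rX_b^{\Gamma_{rb}}H^{\dagger}$ (proved in \cref{appen:useful}). This converts every $CZ$ into a $CX$ and lets $H^{\dagger\otimes B}$ cancel against the $H^{\otimes B}$ implicit in $\ket{+}^{\otimes B}$, so the blue particles start in $\ket{0}^{\otimes B}$ and are simply shifted by their red neighbors. You, by contrast, stay entirely in the $Z$-basis: expand both registers, collect the $CZ$ phases into $\omega^{\vec{i}A_{RB}\vec{j}^{\,T}}$, apply $H^{\dagger}$ to introduce a third index $\vec{l}$, and then collapse the $\vec{j}$-sum by character orthogonality. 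Your method is self-contained and makes the Kronecker-delta mechanism explicit, at the price of the triple-index bookkeeping you flag as the main obstacle; the paper's operator-level commutation sidesteps that bookkeeping altogether and gives a cleaner picture of why each blue label is literally the sum of adjacent red labels, which feeds directly into the step-by-step recipe stated after the proposition.
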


\begin{proof}
 To prove the closed-form expression for $\ket{\psi_{\text{2-color}}}$, we assume without loss of generality that $n_B\geq n_R$. By applying $H^{\dagger}$ to every blue particle, we start with:
    \begin{equation*}
    H^{\dagger\otimes B}\ket{\psi_{\text{2-color}}} = \prod_{r \in R, b \in B} H^{\dagger\otimes B} C_r Z_b^{\Gamma_{rb}} \ket{+}^{\otimes R} \ket{+}^{\otimes B}.
    \end{equation*} 
Using the commutation relation $H^\dagger Z H=X$ for Pauli operators, the $H^\dagger$ gate commutes with the $CZ$ operation as, (see also (\ref{eq:hdagczcomm})),
 \begin{equation*} 
H^{\dagger} C_r Z_b^{\Gamma_{rb}} = C_r X_b^{\Gamma_{rb}} H^{\dagger}\ .
 \end{equation*} 
 Substituting this commutation relation, we obtain: 
    \begin{equation*}
        H^{\dagger\otimes B}\ket{\psi_{\text{2-color}}} = \prod_{r \in R, b \in B} C_r X_b^{\Gamma_{rb}} \ket{+}^{\otimes R} H^{\dagger\otimes B} H^{\otimes B} \ket{0}^{\otimes B}.
    \end{equation*}
This simplifies to:
\begin{equation*}
 H^{\dagger\otimes B}\ket{\psi_{\text{2-color}}}= \prod_{r \in R, b \in B} C_r X_b^{\Gamma_{rb}} \ket{+}^{\otimes R} \ket{0}^{\otimes B}. 
\end{equation*}
Recalling that $\ket{+} = H\ket{0}$, the state becomes:
\begin{align*}
     H^{\dagger\otimes B}\ket{\psi_{\text{2-color}}} &= 
      \sum_{\vec{i}=0}^{d-1}  \prod_{r \in R, b \in B} & C_r X_b^{\Gamma_{rb}} \ket{i_1\ldots i_{n_R}} \ket{0}^{\otimes B}.
\end{align*}   

Applying the $CX$ operation,  the blue particles are updated as: 
\begin{align*}
     &H^{\dagger\otimes B}\ket{\psi_{\text{2-color}}}= 
     & \sum_{\vec{i}=0}^{d-1} \prod_{r \in R, b \in B} \ket{i_1\ldots i_{n_R}} X_{b}^{ \Gamma_{rb}\cdot i_r} \ket{0}^{\otimes B}.
\end{align*}

For each blue particle $b \in B$, the $X_b$ operator applies shifts determined by the sum of neighboring red particle indices:
\begin{align*}
 H^{\dagger\otimes B}\ket{\psi_{\text{2-color}}} &=
  \sum_{\vec{i}=0}^{d-1} \prod_{b \in B} & \ket{i_1\ldots i_{n_R}} X_b^{\sum_{r \in R} \Gamma_{rb}\cdot i_r} \ket{0}^{\otimes B}.    
\end{align*} 

Finally, applying the $X_b$ operators to $\ket{0}^{\otimes B}$ yields:
\begin{equation*}
    H^{\dagger\otimes B}\ket{\psi_{\text{2-color}}}=\sum_{\vec{i}=0}^{d-1} \ket{i_1\ldots i_{n_R}} \bigotimes_{b \in B} \ket{\sum_{r \in R} \Gamma_{rb}\cdot i_r}.
\end{equation*}
The indices of the red particles form a row vector $\vec{i}=(i_1,i_2,...,i_{n_R})$ and defining the generator matrix as $\mathcal{G}=\begin{bmatrix} \mathbb{I}_{n_R} & | & A_{RB}\end{bmatrix}$, where $A_{RB}$ is the top right block of adjacency matrix $\Gamma_{2\text{-color}}$, the equation simplifies to: 
\begin{equation}
    \label{eq:closeform2Col2}
    H^{\dagger\otimes B}\ket{\psi_{\text{2-color}}}=\displaystyle\sum_{\vec{i}=0}^{d-1}\ket{\vec{i}\mathcal{G}} \ . 
\end{equation}
This concludes the proof of the closed-form expression.
\end{proof}

This result of the proposition \ref{thm-2col} provides a direct, step-by-step method for constructing the state vector of any two-colorable graph state. The procedure is as follows:
\begin{itemize}
    \item Step 1: In a two-colorable graph, identify the red and blue vertex sets ($R$ and $B$). Without loss of generality, we have assumed $n_B\geq n_R$. 
    \item Step 2: Assign free indices to the red vertices.
    \item Step 3: Compute the indices for the blue vertices as the sum of their neighboring red vertex indices.
    \item Step 4: Combine these assignments into the state expression and sum over the red vertex indices.
\end{itemize}

To demonstrate the utility of Proposition \ref{thm-2col}, let us consider two examples: a six-particle circular graph and a six-particle cluster graph figure \ref{fig:combined_ex_2col}.

\begin{figure*}[htbp]
    \centering
    \begin{minipage}[b]{0.41\textwidth}
        \centering
        \includegraphics[width=\textwidth]{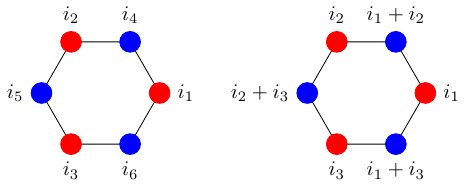} 
        \caption*{(a): Implementation on graph state corresponding to a six-particle circle.}
    \end{minipage}
    \hfill
    \begin{minipage}[b]{0.41\textwidth}
        \centering
        \includegraphics[width=\textwidth]{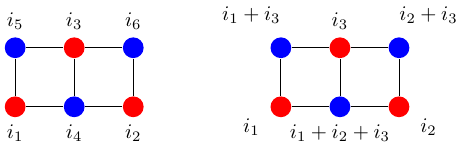} 
        \caption*{(b): Implementation on graph state corresponding to cluster state with six particles.}
    \end{minipage}
    \caption{Two examples of the implementation of the formula obtained in the proposition \ref{thm-2col}. On the left of each sub-figure, we have the indices assigned using the definition of graph states given by equation (\ref{eq:2colgsdef}), and on the right, we have the assignment of indices using the obtained result. }
    \label{fig:combined_ex_2col}
\end{figure*}

\noindent
{\bf Example 1.} Six-Particle Circular Graph: Using the graph in Figure \ref{fig:combined_ex_2col} (a), the graph state is written using the definition in equation (\ref{eq:2colgsdef}) as: 
\begin{align}
\label{eq:2col-ex1-gs}
     \ket{\Psi_{GS}} &=  \sum_{i_1,\dots,i_6=0}^{d-1} \quad \omega^{i_1 i_4} \omega^{i_4 i_2} \omega^{i_2 i_5} \omega^{i_5 i_3} \omega^{i_3 i_6} \omega^{i_6 i_1} \nonumber \\
    &\ket{i_1, i_2, i_3}\ket{i_4, i_5, i_6}.
\end{align}

In this example, the red vertices are labeled as $\{i_1,i_2,i_3\}$, and the blue vertices are $\{i_4,i_5,i_6\}$. Using Proposition \ref{thm-2col}, we assign free indices $i_1,i_2,i_3$ to the red vertices. For the blue vertices, their indices are determined as the summation of the indices of their neighboring red vertices. For instance:
\begin{align*}
i_4 &= i_1 + i_2\\
i_5 &= i_2 + i_3\\
i_6 &= i_3 + i_1 \ .
\end{align*}
The closed-form expression directly follows:
\begin{equation}
    \label{eq:2col-ex1-final}
    \ket{\psi_{\text{closed-form}}}= \sum_{i_1,i_2,i_3 = 0}^{d-1}\ket{i_1,i_2,i_3}\ket{i_1+i_2,i_2+i_3,i_3+i_1}.
\end{equation}
This example demonstrates how to use the graph's structure to implement the closed-form expression.

\noindent
{\bf Example 2.} Six-Particle Cluster Graph: For the graph in Figure \ref{fig:combined_ex_2col} (b), the graph state is expressed using equation (\ref{eq:2colgsdef}) as:
\begin{align}
\label{eq:2col-ex2-gs}
    \ket{\Psi_{GS}} &= \sum_{i_1,\cdots,i_6=0}^{d-1} 
    \omega^{i_ii_5}\omega^{i_1i_4}
    \omega^{i_3i_5}\omega^{i_3i_4}\omega^{i_3i_6}
    \omega^{i_2i_4}\omega^{i_2i_6}\nonumber \\
    &\quad \ket{i_1, i_2, i_3}\ket{i_4, i_5, i_6}.
\end{align}
Here, again the red vertices are $\{i_1,i_2,i_3\}$, and the blue vertices are$\{i_4,i_5,i_6\}$. Assigning free indices $i_1,i_2,i_3$ to the red vertices, the indices of the blue vertices are computed as:
\begin{align*}
i_4 &= i_1 + i_2 + i_3\\
i_5 &= i_1 + i_3\\
i_6 &= i_2 + i_3 \ .
\end{align*}
The closed-form expression is:
\begin{align}
\label{eq:2col-ex2-final}
& \ket{\psi_{\text{closed-form}}} =\nonumber \\ &\sum_{i_1,i_2,i_3 = 0}^{d-1} \ket{i_1,i_2,i_3} 
\ket{i_1+i_2+i_3,i_1+i_3,i_2+i_3}.
\end{align}

These examples illustrate how the closed-form expression simplifies the representation of two-colorable graph states.

Proposition \ref{thm-2col} has important implications for the Schmidt measure of two-colorable graph states. From \cite{Eisert}, any state vector $\ket{\phi}\in\mathcal{H}^{(1)}\otimes\cdots\otimes\mathcal{H}^{(n)}$ of a compound quantum system with $n$ components can be expressed as:
\begin{equation}
\label{eq:ketSM}
    \ket{\phi} = \sum_{i=1}^{\Lambda}\zeta_{i}\ket{\phi_i^{(1)}}\otimes\cdots\otimes\ket{\phi_i^{(n)}},
\end{equation}
where $\zeta_i\in\mathbb{C}$ for $i=1,\cdots,\Lambda$ and $\ket{\phi_i^{(j)}}\in \mathcal{H}^{(j)}$. Bearing this information into account, the Schmidt measure is defined as \cite{Eisert,SM-Eisert}:
\begin{equation}
    E_s(\ket{\psi})=\text{log}_2(\lambda),
\end{equation}
where $\lambda$ is the minimal number $\Lambda$ of the terms in the sum presented in (\ref{eq:ketSM}), over every linear decomposition into product states. Considering proposition 7 of \cite{Eisert}, according to which the Schmidt measure of a two-colorable graph state is bounded as follows:
\begin{equation}
\label{eq:SMbound1}
    \frac{1}{2}\text{rank}\left(\Gamma_{\text{2-color}}\right)\leq E_s(\ket{\psi_{\text{2-color}}})\leq\lfloor\frac{n_B+n_R}{2}\rfloor.
\end{equation}
This result is crucial to use, because according to proposition \ref{thm-2col} when the $H^{\dagger}$ operation is applied to every blue particle, while always assuming that $n_B\geq n_R$, the number of terms obtained in the closed-form is $d^{n_R}$. It is trivial to prove that:
\begin{equation}
    \label{eq:rank-proof1}
    \frac{1}{2}\text{rank}\left(\Gamma_{\text{2-color}}\right) = \frac{1}{2}\left(2\cdot \text{rank}\left(A_{RB}\right)\right)  = \text{rank}\left(A_{RB}\right).
\end{equation}

Finally, the minimum value of the rank of the matrix $A_{RB}$ is:
\begin{equation}
    \label{eq:rank-proof2}
    \text{min}\{\text{rank}\left( A_{RB}\right)\} = \text{min}\left(n_R,n_B\right).
\end{equation}
Therefore, using equation (\ref{eq:rank-proof2}) in equation (\ref{eq:SMbound1}), we have that for any two-colorable graph state the Schmidt measure is bounded as:
\begin{equation}
\label{eq:SMbound2}
    \text{min}\left(n_R,n_B\right)\leq E_s(\ket{\psi_{\text{2-color}}})\leq\lfloor\frac{n_B+n_R}{2}\rfloor.
\end{equation}
With proposition \ref{thm-2col}, we demonstrated how to systematically arrive at a state that has the minimal Schmidt measure. Furthermore, with proposition \ref{thm-2col} we demonstrated the strategy where the upper and the lower bound of the equation (\ref{eq:SMbound2}) can be obtained.

It must be underlined that the bounds on the Schmidt measure given in equation (\ref{eq:SMbound1}) can be applied for any local dimension $d$. This is true since it is trivial to change the base of the logarithm from $\text{log}_2$ to $\text{log}_d$ and then the factors will cancel out since they are greater than one. Therefore, it can be stated that the number of terms needed to describe any two-colorable graph state in the Z-eigenbasis, denoted as $N(\ket{\psi_{\text{2-color}}})$, is bounded as follows:
\begin{equation}
    d^{n_R}\leq N(\ket{\psi_{\text{2-color}}})\leq d^{\lfloor \frac{n_B+n_R}{2}\rfloor}, \text{where}\  n_R\leq n_B,
\end{equation}
where the lower bound is obtained by applying $H^{\dagger}$ to every blue particle and the upper bound is obtained by applying $H^{\dagger}$ to in total $\lfloor\frac{n_B+n_R}{2}\rfloor$ particles.

A final vital remark on the implications of proposition \ref{thm-2col} is that every two-colorable graph state can be written as a state that can be defined based on an orthogonal array. According to \cite{OAdefhedayat1999orthogonal}, an orthogonal array denoted as $OA(r, n,d,k)$ is a positioning formed up by $r$ rows, $n$ columns, and the entries are taking values ranging from the set $\{0,\cdots,d-1\}$. The important property here is that every subset with $k$ columns has all the combinations of symbols, which occur the same number of times through the rows. As per \cite{OAdefhedayat1999orthogonal}, an OA can be defined if a row vector $\vec{x} = (x_1,\cdots,x_d)$ and a generator matrix of the form $\mathcal{G}=\begin{bmatrix} \mathbb{I}_{d} & | & M\end{bmatrix}$ are assumed where the matrix $M$ is an $d\times p$ matrix. Then, the OA is of the form $OA = \vec{x}\cdot \mathcal{G}$ and this form is exactly what was obtained in proposition \ref{thm-2col}.

To understand the notion OA better, let us briefly discuss the $OA(4, 4, 2, 2) $ that has 4 rows, 4 columns, 2 levels per factor, and a strength of 2. This means that for any two columns chosen, all combinations of the levels 0 and 1 will appear equally across the trials. As an example, the following can be written:
\begin{equation}
\begin{array}{cccc}
0 & 0 & 0 & 0 \\
0 & 1 & 1 & 1 \\
1 & 0 & 1 & 1 \\
1 & 1 & 0 & 0 \\
\end{array}\nonumber
\end{equation}
Two OAs are equivalent if one can be transformed into the other by applying permutations or relabeling symbols within rows or columns. As it is explained in \cite{zahra_qoa}, states with this property demonstrate a high persistence in entanglement, making them great candidates for the development of multipartite quantum information protocols and in our case we demonstrated that these advantages are apparent for every two-colorable graph state.

\section{Three-Colorable graph states}\label{section:3col-gen}

Extending our analysis to three-colorable graph states ($\chi =3$), we now consider graphs where the vertex set can be partitioned into three disjoint subsets. Unlike the two-colorable case, three-colorability introduces additional structural complexity, and determining a valid 3-coloring is known to be NP-hard \cite{Karp1972,stockmeyer1973planar,lovasz1973coverings}. Let $R,G,B$ denote the sets of blue, green, and red vertices, respectively, with cardinalities $n_R, n_G$ and $n_B$. The total number of vertices in the graph is then given by $n=n_R + n_G +n_B$. In our notation, choosing a vertex $r \in R$ corresponds to selecting a red vertex, and similarly for $b\in B$ and $g \in G$.
 Therefore $\ket{\psi_{\text{3-color}}}$ is given as follows:
 \begin{widetext}
\begin{align}
\label{eq:3coldefgeneral}
    \ket{\psi_{\text{3-color}}} = \left(\prod_{r \in R, b \in B} C_r Z_{b}^{\Gamma_{rb}}\right)
     \left(\prod_{r \in R, g \in G} C_r Z_{g}^{\Gamma_{rg}}\right)
     \left(\prod_{g \in G, b \in B} C_g Z_{b}^{\Gamma_{gb}}\right)\ket{+}^{\otimes R}\ket{+}^{\otimes G}\ket{+}^{\otimes B}.
\end{align}
\end{widetext}
Partitioning the adjacency matrix into blocks clarifies the interactions between different color groups, allowing us to systematically construct the graph state. The adjacency matrix for $\chi = 3$ is denoted as:
\begin{equation}
\label{eq:gamma-3col-general}
\Gamma_{3\text{-color-general}} = \left(\begin{array}{c|c|c}
  \smash{\overbrace{0}^{\text{R}}} & \smash{\overbrace{A_{GR}}^{\text{G}}} & \smash{\overbrace{A_{BR}}^{\text{B}}} \\ \hline 
  A_{GR}^T & 0 & A_{BG} \\ \hline
  A_{BR}^T & A_{BG}^T & 0 \\ 
\end{array}\right).
\end{equation}

\begin{proposition}\label{thm-3col-general}
Let us assume a three-colorable graph state described by the adjacency matrix (\ref{eq:gamma-3col-general}), defined as in equation (\ref{eq:3coldefgeneral}), and satisfying the set conditions described at the beginning of this section. Then, any such graph state satisfies the following equation:
\begin{equation}
\label{eq:3colfinalresultv2}
    H^{\dagger\otimes B} \ket{\psi_{\text{3-color}}} =  
   \sum_{\vec{w}=0}^{d-1}
    \left( \bigotimes_{j=1}^{n_G} Z_{g_j}^{\vec{u} \cdot (\vec{A}_{GR})_{g_j}} \right)\ket{\vec{w} \cdot \mathcal{G}_{RB,GB}},
\end{equation}
where $\vec{u}=(u_1,u_2...,u_{n_R})$, $\vec{v}=(v_1,\cdots,v_{n_G})$ both row vectors, and $\vec{w}=(\vec{u},\vec{v})$ is the concatenation of $\vec{u}$ and $\vec{v}$. The index $g_j$ denotes a particle having the green color and therefore $j\in\{1,\cdots,n_G\}$. The matrix $
\mathcal{G}_{RB,GB}$ (additional connection between green and blue vertices compared to 
$\mathcal{G}_{RB}$) is a generator-like matrix with dimensions $(n_R+n_G)\times (n_R+n_G+n_B)$ defined as:
\begin{equation}
    \label{eq:G-3col-general1}
   \mathcal{G}_{RB,GB} = 
\begin{bmatrix} 
\mathbb{I}_{n_R+n_G} & | & \begin{bmatrix} A_{BR} \\ A_{BG} \end{bmatrix}
\end{bmatrix},
\end{equation}
with $A_{BR}$ and $A_{BG}$ the corresponding blocks of matrix (\ref{eq:gamma-3col-general}). Finally, the vector $(\vec{A}_{GR})_{g_j}$, denotes the vector constructed from extracting the row $g_j$ of the matrix $A_{GR}$, which corresponds to the row containing all the elements of the adjacency matrix (\ref{eq:gamma-3col-general}) that describe a specific green particle $g_j$. 

\end{proposition}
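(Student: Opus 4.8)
The plan is to follow the same strategy that proved Proposition~\ref{thm-2col}, applying $H^{\dagger}$ only to the blue particles and commuting it through the gates, while isolating the one genuinely new ingredient: the red--green edges, whose $CZ$ gates never touch a blue qudit and therefore cannot be converted into index shifts. First I would split the three products in~(\ref{eq:3coldefgeneral}) according to whether a blue particle is the target. The gates $C_rZ_b^{\Gamma_{rb}}$ and $C_gZ_b^{\Gamma_{gb}}$ both have blue targets, so the relation $H^{\dagger}ZH=X$ (already used in the two-colorable proof, cf.~(\ref{eq:hdagczcomm})) turns them into $C_rX_b^{\Gamma_{rb}}$ and $C_gX_b^{\Gamma_{gb}}$ while $H^{\dagger\otimes B}$ slides to the right; the gates $C_rZ_g^{\Gamma_{rg}}$ act on no blue qudit, so $H^{\dagger\otimes B}$ passes through them trivially. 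As in the bipartite case the trailing $H^{\dagger\otimes B}H^{\otimes B}$ collapses to the identity on the blue register, leaving $\ket{0}^{\otimes B}$, and I expand $\ket{+}^{\otimes R}\ket{+}^{\otimes G}=\sum_{\vec u,\vec v}\ket{\vec u}\ket{\vec v}$ with $\vec u$ ranging over red indices and $\vec v$ over green indices.

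The observation that makes the bookkeeping tractable is that the red and green registers never leave the computational basis: every surviving operator is either a diagonal phase gate $C_rZ_g$ or a shift gate $C_rX_b,\,C_gX_b$ whose control (red or green) is unchanged and whose target is blue. Consequently all remaining operators commute, so I may apply them in any convenient order. Applying the shift gates to $\ket{0}^{\otimes B}$ sends each blue qudit $b$ from $0$ to $\sum_{r}\Gamma_{rb}u_r+\sum_{g}\Gamma_{gb}v_g$; stacking these linear forms is exactly multiplication of the row vector $\vec w=(\vec u,\vec v)$ by $\mathcal{G}_{RB,GB}$ of~(\ref{eq:G-3col-general1}), whose identity block reproduces $\ket{\vec u}\ket{\vec v}$ on the red and green registers and whose $\begin{bmatrix}A_{BR}\\A_{BG}\end{bmatrix}$ block reproduces the blue shifts, yielding the ket $\ket{\vec w\cdot\mathcal{G}_{RB,GB}}$.

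The remaining red--green phase gates are where the new structure appears and where I expect the main difficulty to lie. Because the green register is not Hadamard-transformed, these gates cannot be absorbed into a shift; acting on $\ket{\vec u}\ket{\vec v}$ they contribute the phase $\omega^{\sum_{r,g}\Gamma_{rg}u_rv_g}$, which genuinely entangles red and green and must be carried along explicitly. Grouping this phase by green particle, the factor attached to $g_j$ is $\omega^{v_{g_j}(\vec u\cdot(\vec A_{GR})_{g_j})}$, i.e.\ precisely the action of $Z_{g_j}^{\vec u\cdot(\vec A_{GR})_{g_j}}$ on $\ket{v_{g_j}}$, where $(\vec A_{GR})_{g_j}$ is the row of edge weights joining $g_j$ to the red vertices. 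Collecting these over all $g_j$ and summing over $\vec w$ gives~(\ref{eq:3colfinalresultv2}).

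The crux is therefore purely a matter of indexing: verifying that the double sum $\sum_{r,g}\Gamma_{rg}u_rv_g$ regroups correctly into $\bigotimes_j Z_{g_j}^{\vec u\cdot(\vec A_{GR})_{g_j}}$, and checking that this grouping is consistent with the block structure of~(\ref{eq:gamma-3col-general}) (so that the weights read off for the green $Z$-operators and those entering the blue shifts come from the correct blocks $A_{GR}$, $A_{BR}$, and $A_{BG}$). Everything else is the two-colorable computation run on the enlarged red-plus-green control register, and the commutativity established above guarantees that no operator-ordering ambiguity can spoil the regrouping.
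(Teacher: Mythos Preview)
Your proposal is correct and follows essentially the same approach as the paper: commute $H^{\dagger\otimes B}$ through the gates with blue targets to convert $CZ$ into $CX$, leave the red--green $CZ$ gates untouched (they act on no blue qudit), expand the red and green $\ket{+}$ registers in the computational basis, read off the blue shifts as $\vec w\cdot\mathcal{G}_{RB,GB}$, and then recognize the residual red--green phase as the tensor of $Z_{g_j}$ operators. Your explicit commutativity observation (that all surviving operators have computational-basis controls on red/green) is a slightly cleaner justification of the reordering than the paper gives, but the route is the same.
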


\begin{proof}
Let us start with the definition given in equation (\ref{eq:3coldefgeneral}). Recalling equation (\ref{eq:3-col-proof-1}), we can define the following row vectors $\vec{u}=(u_1,u_2...,u_{n_R})$, $\vec{v}=(v_1,\cdots,v_{n_G})$. It is useful to define the vector $\vec{w}=(\vec{u},\vec{v})$, the concatenation of $\vec{u}$ and $\vec{v}$. Let us perform the $CZ$ operations that involve the blue particles:

\begin{equation}
\begin{aligned}
    H^{\dagger \otimes B} \ket{\psi_{\text{3-color}}} = & \, H^{\dagger \otimes B} 
  \sum_{\substack{\vec{u}, \vec{v}=0}}^{d-1}
     \prod_{r \in R, g \in G} C_r Z_{g}^{\Gamma_{rg}}  \\
    & \ket{\vec{u}} \ket{\vec{v}} 
     \bigotimes_{b \in B} Z_b^{f_r(b) + f_g(b)} \ket{+}_b.
\end{aligned}
\end{equation}
where $f_r(b) = \sum_{r\in R}u_r\Gamma_{rb}$ and $f_g(b) = \sum_{g\in G}v_g\Gamma_{gb}$. We should note that, till this point, we have not performed the $H^{\dagger \otimes B}$ operation. It is clear that since it is a local operation acting only on the blue particles, it will commute via the remaining CZ operation between the green and the red particles. This will lead to:
\begin{equation}
    H^{\dagger \otimes b}Z_b^{f_r(b)+f_g(b)} = X_b^{f_r(b)+f_g(b)}H^{\dagger \otimes b},\quad \text{with} \quad b\in B.
\end{equation}
This implies that:
\begin{equation}
\label{eq:3-col-proof-1}
\begin{aligned}
    H^{\dagger \otimes B} \ket{\psi_{\text{3-color}}} = & 
    \sum_{\substack{\vec{u},\vec{v}=0}}^{d-1}
    \left(\prod_{r \in R, g \in G} C_r Z_{g}^{\Gamma_{rg}}\right) \\
    & \ket{\vec{u}} \ket{\vec{v}} \bigotimes_{b \in B} \ket{f_r(b) + f_g(b)}_b.
\end{aligned}
\end{equation}

At this point, there is a remaining CZ operation between the green and the red particles. Unfortunately, the trick to use another Hadamard gate can not be applied here. We have to understand that the primary goal is to find a way to obtain as few terms as possible in the Z-eigenbasis. In any case, we seek to reduce the number of phases in front of each bracket, but this would unavoidably increase the number of terms used in the Z-eigenbasis. 

We can define the matrix $\mathcal{G}_{RB,GB}$, which is generator-like matrix with dimensions $(n_R+n_G)\times (n_R+n_G+n_B)$, as:
\begin{equation}
    \label{eq:G-3col-general2}
   \mathcal{G}_{RB,GB} = 
\begin{bmatrix} 
\mathbb{I}_{n_R+n_G} & | & \begin{bmatrix} A_{BR} \\ A_{BG} \end{bmatrix}
\end{bmatrix},
\end{equation}
with $A_{BR}$ and $A_{BG}$ the corresponding blocks of matrix (\ref{eq:gamma-3col-general}). In this juncture, the equation (\ref{eq:3-col-proof-1}) will read as:

\begin{equation}
\label{eq:3-col-proof-2}
\begin{aligned}
    H^{\dagger \otimes B} \ket{\psi_{\text{3-color}}} =  
    \sum_{\vec{w}=0}^{d-1}
    \left( \prod_{r \in R, g \in G} C_r Z_{g}^{\Gamma_{rg}} \right)\ket{\vec{w} \cdot \mathcal{G}_{RB,GB}},
\end{aligned}
\end{equation}
where $\vec{w}=0$ denotes the range of the indices $u_1, \dots, u_{n_R}, \\v_1, \dots, v_{n_G}=0$.

The final step is to realize the power of the adjacency matrix (\ref{eq:gamma-3col-general}). Assume that the index $g_j$ denotes a particle having the green color and therefore $j\in\{1,\cdots,n_G\}$. Finally, define the vector $(\vec{A}_{GR})_{g_j}$, to denote the vector constructed from extracting the row $g_j$ of the matrix $A_{GR}$, which corresponds to the row containing all the elements of the adjacency matrix (\ref{eq:gamma-3col-general}) that describe a specific green particle $g_j$.
\begin{align}
     H^{\dagger\otimes B} \ket{\psi_{\text{3-color}}} = \nonumber 
   \sum_{\vec{w}=0}^{d-1}
    \left( \bigotimes_{j=1}^{n_G} Z_{g_j}^{\vec{u} \cdot (\vec{A}_{GR})_{g_j}} \right)\ket{\vec{w} \cdot \mathcal{G}_{RB,GB}}.
\end{align}

This concludes our proof.
\end{proof}
At this point, we turn our focus on discussing the implications of our proposition \ref{thm-3col-general}. Let us start with the fact that we are equipped with an easy way to write a general three-colorable graph state. Here a step-by-step process is presented:
\begin{itemize}
    \item Step 1: In a three-colorable graph identify the colors of the graph and assign the colors following that without loss of generality $n_R\le n_G\le n_B$.
    \item Step 2: Assign free indices to the red and green particles.
    \item Step 3: The indices of the blue particles are the summation of the neighboring indices of each blue particle weighted with the corresponding weight of the adjacency matrix.
    \item Step 4: To identify the correct exponent for the Z operations for each green particle, we need to identify the red neighbors of each green particle and multiply their free indices scaled with the adequate weight imposed by the adjacency matrix (\ref{eq:gamma-3col-general})
\end{itemize}
\begin{figure*}[htbp]
    \centering
    \includegraphics[width=0.5\linewidth]{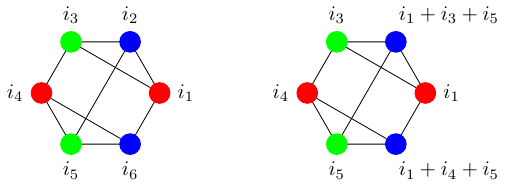}
    \caption{An example of implementation of the closed-form expression for a 6 particle three-colorable graph state.}
    \label{fig:3col-gen-ex}
\end{figure*}
The state given in the left-hand side of figure \ref{fig:3col-gen-ex} can be written as:
\begin{equation}
\begin{aligned}
    \ket{\psi_{\text{GS}}} = & \, 
    \sum_{i_1, \ldots, i_6=0}^{d-1} 
    \omega^{i_1 i_2 + i_2 i_3 + i_3 i_4 + i_4 i_5 + i_5 i_6 + i_1 i_3 + i_4 i_6 + i_2 i_5} \\
    & \ket{i_1, i_2, i_3, i_4, i_5, i_6}.
\end{aligned}
\end{equation}

Applying our methodology, the indices for the red and green particles are kept as they were before, but the indices for the red particles are given with respect to the red and green particles, as the right-hand side of figure \ref{fig:3col-gen-ex} indicates. With these steps, we have implemented steps 1 to 3. The determination of the correct exponents for the Z operators is easy, since one has to simply identify the green particles that have a red connection. Then multiply the indices with the corresponding weight coming from the adjacency matrix, which in this case is one. So, the closed-form expression of this example is:
\begin{equation}
\begin{aligned}
    \ket{\psi_{\text{3-color}}} = & \, 
    \sum_{i_1, i_3, i_4, i_5 = 0}^{d-1} 
    Z_{g_{i_3}}^{i_1 i_3} Z_{g_{i_5}}^{i_4 i_5} \\
    & \ket{i_1, i_1 + i_3 + i_5, i_3, i_4, i_5, i_1 + i_4 + i_5}.
\end{aligned}
\end{equation}

To avoid confusion, the indices $g_{i_3}$ and $g_{i_5}$ are used to denote the Z operations acting on the green particles that in figure \ref{fig:3col-gen-ex}, are denoted with indices $i_3$ and $i_5$. It is important to note that there is a difference between the indexing to implement the closed-form expression and the indices to name and group the particles according to their color.
At this juncture, it has to be pointed out that the above example was not chosen randomly. According to \cite{zahra_qoa}, this state is also an absolutely maximally entangled state of six particles with local dimension $d$ being a prime number. We denote these states as $AME(6,d)$. In the next section, \ref{section:3col-special} we are going to tailor the formalism presented here to better represent graph states that are highly entangled.

Our analysis shows that every three-colorable graph state can be rewritten in the form of a QOA, reinforcing the deep connection between graph states and highly entangled quantum states \cite{QOA-def1,QOA-def2,QOA-def3,zahra_qoa}. QOAs are combinatorial designs that contain rows composed of pure multipartite quantum states and exhibit high entanglement properties. Bearing in mind the definition of the classical OAs, the QOAs are constructed assuming specific parameters for the OA \cite{QOA-construction-1}. As previously mentioned, QOAs display high entanglement properties and therefore they can be used to generate $k$-uniform states and absolutely maximally entangled states \cite{QOA-def2}. Additionally, they are rather useful in decoupling schemes in quantum networks \cite{QOA-applications-1,QOA-applications-2}. Various QOA setups have been studied in works like \cite{zahra_qoa,QOA-kUNI-1}. 

In general, proposition \ref{thm-3col-general} implies that every three-colorable graph state can be written as a QOA. Initially, it is trivial to understand that if the correct vector $\vec{w}$ and the correct $\mathcal{G}_{RB,GB}$ matrix, following the notation from proposition \ref{thm-3col-general}, is used then, the object $\vec{w}\cdot \mathcal{G}_{RB,GB}$ represents an OA. The reason why any three-colorable graph, is in general, a QOA relies on the additional Z phases that we were not able to eliminate. In the scope of graph states, it is easy to understand if we recall that every three-colorable graph can become two-colorable if the appropriate number of edges is deleted. Reversing the process, adding edges to a two-colorable graph means that in general, we act with a non-local CZ operation between vertices that have the same color, and thus we create a three-colorable graph. This means that we are acting with a non-local operator on a state that is initially described by an OA. This implies that every three-colorable graph is described as a QOA since the so-called ``quantumness of the QOA" is modified as is mentioned in \cite{zahra_qoa}. We refer the reader to the aforementioned work for a detailed analysis of this concept.

Another remark regarding proposition \ref{thm-3col-general} is about the Schmidt measure of any three-colorable graph. As we have already explained in section \ref{section:2-col}, this discussion is essentially the same as finding the minimal number of terms required to describe the state in the Z-eigenbasis. Connecting with the fact that every three-colorable graph is a QOA, finding the minimum of the Schmidt measure means we have to determine the minimum number of rows required to describe the chosen three-colorable graph state as a QOA. 
Initially, in section \ref{section:2-col} it was established that
\begin{equation}
\label{eq:SM3col-1}
    \text{min}\left(n_R,n_B\right)\leq E_s(\ket{\psi_{\text{2-color}}}).
\end{equation}
A similar thought process can be followed also in this case if we assume two bipartitions of the system. In the first bipartition we assume that we have the red and green colored particles, while for the second one we assume only the blue particles. Instead of considering without loss of generality that $n_R\le n_G\le n_B$, we can assume a stronger relationship that is $n_R+n_G\le n_B$. If this is the case, then it is straightforward to understand, using equation (\ref{eq:SM3col-1}) that:

\begin{equation}
\label{eq:SM3col-2}
    E_s(\ket{\psi_{\text{3-color}}})\ge\text{min}\left(n_R+n_G,n_B\right)= n_R+n_G.
\end{equation}
which demonstrates that our closed-form expression achieves to obtain states that have the minimal Schmidt measure as long as the stricter condition imposed is satisfied. For this discussion, one should remember that for graph states that are equivalent under local unitary operations, this means their Schmidt measure must have the same value. We examine the case where the condition $n_R+n_G\le n_B$ is not satisfied. In this case, we can always trace one of the colors out and the stricter bound is:
\begin{equation}
    \label{eq:SM-3-col-ng-bound}
    n_G\le E_s(\ket{\psi_{\text{3-color}}})
\end{equation}
The above result can be understood intuitively by a very simple example. Let us consider a graph state with three particles. From \cite{Eisert} it is known there is only one LC class for qubits. We can assume two different states, one that creates a circle with 3 particles. In this case, this is the simplest three-colorable graph state. On the other hand, we can assume the same graph, but we can delete one edge, which is a two-colorable graph. For this graph we know that the Schmidt measure is 1, since it is a two-colorable graph with two blue vertices and one red. This means that the Schmidt measure is also 1 for the graph state with the graph the circle with 3 vertices. In the section \ref{section:x-colorable}, we are going to elaborate on this discussion. The crucial note at this point is that as long as the condition $n_R+n_G\le n_B$ is satisfied, the equation (\ref{eq:SM3col-2}) holds. This means that, by changing the base of the logarithm of the Schmidt measure from $\text{log}_2$ to $\text{log}_d$ and by denoting as $N(\ket{\psi_{\text{3-color}}})$ the number of terms in the Z eigenbasis for the three-colorable graph states reads:
\begin{equation}
    d^{n_R+n_G}\leq N(\ket{\psi_{\text{3-color}}}), \quad \text{where}\quad n_R+n_G\leq n_B.
\end{equation}
This means that as long as $n_R+n_G\leq n_B$, the derived closed-form given by \ref{thm-3col-general} provides the minimum number of terms in the Z-eigenbasis.

\section{A special class of three-colorable Graph States }\label{section:3col-special}
We extend our analysis of three-colorable graph states to a special class tailored for the construction of $k$-uniform and AME states. This approach serves two main purposes. Firstly, these states are of particular interest in quantum information applications. Secondly, the presented methodology is an alternative approach to obtaining the closed-form expression for this special class of three-colorable graph states. Moreover, this framework offers an intuitive perspective on determining the equivalence of two graph states under local operators, which we will discuss in detail in Section \ref{section:LT}.
\begin{itemize}
    \item Start with a two-colorable graph as the one assumed in section \ref{section:2-col} for the graph state $\ket{\psi_{\text{2-color}}}$, therefore, in this case, the $n_R\leq n_B$ without loss of generality once again.
    \item Assume that the blue particles are separated into 2 sets named $B_c$ and $B_u$. This means that $B_c\cup B_u = B$ and $B_c\cap B_u = \emptyset$. The number of blue particles in the set $B_c$ is $n_{B_c}$ and the number of blue particles in the set $B_u$ is $n_{B_u}$.
    \item It is crucial to assume that $n_R\leq n_{B_u}$. In principle, we can also impose conditions on the relationship between $n_R$ and $n_{B_c}$, but, as we are going to establish later on, this is not important. Therefore, without loss of generality, we are going to assume that $n_R\leq n_{B_c}$.
    \item Among the particles belonging in the set $B_c$, without omitting any of the connections established in the initial graph, create a new two-colorable graph. This means that now the set of particles, belonging to $B_c$, either remain blue particles or change to become green particles. The set of these green particles is denoted as $G$ and the total number of green particles is denoted as $n_G$. The set of these blue particles is denoted as $B_c\setminus G$ and the total number of these blue particles is denoted as $n_{B_c\setminus G}$. We impose without loss of generality that $n_G\leq n_{B_c\setminus G}$. 
    \item The blue particles belonging to $B_u$ remain unaffected as they were in the 2-colorable graph state $\ket{\psi_{\text{2-color}}}$.
\end{itemize}
The above procedure is summarized in figure \ref{fig:3col-setup}. Although it may look like many restrictions have appeared in the construction of this type of three colorable states, these setups have appeared in various contexts, for example for $k$-uniform states \cite{zahra_adam_kuni} or for cluster states, where a few particles of the same color have been connected like in \cite{guhne_marginals}.
\begin{figure*}[htbp]
    \centering
    \includegraphics[width=1.0\linewidth]{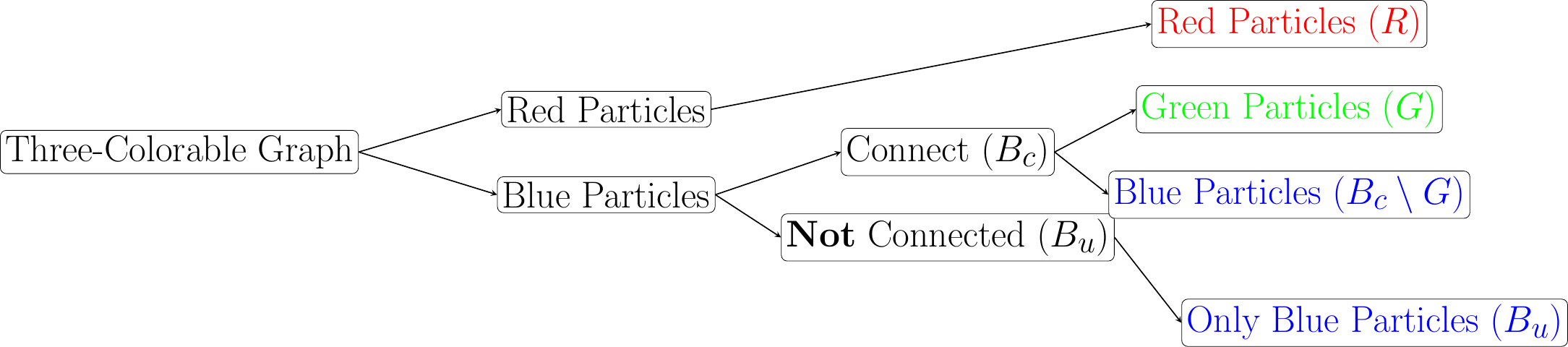}
    \caption{A schematic description of the creation of the three colorable graph states studied in this work.}
    \label{fig:3col-setup}
\end{figure*}
Bearing in mind this information, $\ket{\psi_{\text{3-color}}}$ is defined as follows:
\begin{align}
\label{eq:3coldef}
    \ket{\psi_{\text{3-color}}} = \left(\prod_{b, b' \in B_c} C_b Z_{b'}^{\Gamma_{bb'}}\right)\left(\prod_{b\in B_c,r\in R} C_b Z_{r}^{\Gamma_{br}}\right)\nonumber\\
    \left(\prod_{b\in B_u,r\in R} C_b Z_{r}^{\Gamma_{br}}\right)\ket{+}^{\otimes R}\ket{+}^{\otimes B_u}\ket{+}^{\otimes B_c}.
\end{align}
At this point, understanding the form of the adjacency matrix is crucial to proceed with our main results. Its form is given by equation (\ref{eq:gamma-3col-studied}):
\begin{equation}
\label{eq:gamma-3col-studied}
\Gamma_{3\text{-color}} = \left(\begin{array}{c|c|c|c}
  \smash{\overbrace{\textbf{0}}^{\text{R}}} & \mathbf{\smash{\overbrace{\mathbf{A_{B_uR}}}^{B_u}}} & \mathbf{\smash{\overbrace{A_{B_c\setminus GR}}^{B_c\setminus G}}} & \mathbf{\smash{\overbrace{A_{GR}}^{\text{G}}}} \\ \hline 
  \mathbf{A_{B_uR}^T} & \textbf{0} & 0 & 0 \\ \hline
  A_{B_c\setminus G R}^T & 0 & \textbf{0} & \mathbf{A_{GB_c\setminus G}} \\ \hline
  A_{GR}^T & 0 & \mathbf{A_{GB_c\setminus G}^T} & \textbf{0} \\ 
\end{array}\right).
\end{equation}
Additionally, the comparison of the matrix (\ref{eq:gamma-3col-studied}) with the general form of the adjacency matrix of any three-colorable graph given in equation (\ref{eq:gamma-3col-general}) is unavoidable.

In both cases, the matrices given in equations (\ref{eq:gamma-3col-studied}) and (\ref{eq:gamma-3col-general}) are block matrices. The non-zero blocks denoted with $A$ and a subscript with two or more letters indicating the sets of different colored particles, are assumed they be connected. It is also obvious, that the matrix (\ref{eq:gamma-3col-general}) is a generalization of the matrix (\ref{eq:gamma-3col-studied}). Finally, the reason we denoted with bold letters the part of the matrix (\ref{eq:gamma-3col-studied}) is to make apparent the condition we have imposed, firstly, one has to assume a two-colorable graph, this is the top block with bold letters. Then, if additional connections are assumed, we have the contribution of the additional connections on the two-colorable graph to make a three-colorable graph. Finally, the parts that are non-zero but still not indicated in bold represent the connection of the green and the blue particles with the red ones. If this connection is not maintained, then we would have ended up with two disconnected diagrams, which is not the case for us.

A closing note is that the total number of red particles is denoted by $n_R$, the total number of blue particles is represented with $n_B$, the total number of green particles is expressed as $n_G$, the total number of blue particles belonging to the set $B_u$ is symbolized as $n_{B_u}$ and the total number of blue particles belonging to the set $B_c\setminus G$ is indicated as $n_{B_c\setminus G}$. Finally, yet importantly, it is assumed without loss of generality that $n_R\leq n_B + n_G $ and $n_G\leq n_{B_c\setminus G}$. At this point, the following proposition can be presented:
\begin{proposition}\label{thm-3col}
Let us assume a three-colorable graph state described by the adjacency matrix (\ref{eq:gamma-3col-studied}), defined as in equation (\ref{eq:3coldef}), satisfying the set conditions described in the step-by-step construction at the beginning of this section and without loss of generality $n_R\leq n_{B_u}$, and $n_G\leq n_{B_c\setminus G}$. Then, any such graph state satisfies the following equation:
\begin{align}
\label{eq:3colfinalresultv2special}
     H^{\dagger\otimes B_c\setminus G} H^{\dagger\otimes B_u} \ket{\psi_{\text{3-color}}} = \sum_{\vec{u}=0}^{d-1} \ket{\vec{u}\mathcal{G}_{B_uR}}\Delta\sum_{\vec{g}=0}^{d-1}\ket{\vec{g}\mathcal{G}_{B_c\setminus G}},
\end{align}
where $\vec{u}=(u_1,u_2...,u_{n_R})$, $\vec{g}=(g_1,\cdots,g_{n_G})$ both row vectors, $\mathcal{G}_{B_uR}=\begin{bmatrix} \mathbb{I}_{n_R} & | & A_{B_uR}^T\end{bmatrix}$, $\mathcal{G}_{B_c\setminus G}=\begin{bmatrix} \mathbb{I}_{n_R} & | & A_{GB_c\setminus G}^T\end{bmatrix}$, $f_{k\in B_c} = \sum_{r\in R}\Gamma_{rk}u_r$, and the operator $\Delta$ is defined as following:
\begin{equation}
\label{eq:thm-delta-def}
    \Delta = \left(\bigotimes_{i=1}^{n_G}Z_{g_i}^{f_{g_i}}\right)\left(\bigotimes_{j=1}^{n_{B_c\setminus G}}X_{b_j}^{f_{b_j}}\right).
\end{equation}
\end{proposition}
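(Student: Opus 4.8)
The plan is to reduce the whole computation to two independent applications of the two-colorable closed form of Proposition \ref{thm-2col}, glued together by the operator $\Delta$ that records the residual red-to-$B_c$ interactions. The structural observation driving this is that in the adjacency matrix (\ref{eq:gamma-3col-studied}) the red set $R$ appears only as a control (its diagonal block vanishes and it is never a target), the set $B_u$ is joined only to $R$, and the set $B_c=(B_c\setminus G)\cup G$ carries both the red connections and the inner two-colorable graph between $G$ and $B_c\setminus G$. Accordingly I would first push the two Hadamard layers $H^{\dagger\otimes B_u}$ and $H^{\dagger\otimes(B_c\setminus G)}$ through the product of $CZ$ gates in (\ref{eq:3coldef}), using the identity $H^{\dagger}CZ = CX\,H^{\dagger}$ on the target register that already appeared in the proof of Proposition \ref{thm-2col}. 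Since these two Hadamard layers act on disjoint registers and commute with every $CZ$ not involving their own register, they convert the red-$B_u$, red-$(B_c\setminus G)$ and green-$(B_c\setminus G)$ controlled-$Z$ gates into controlled-$X$ gates, while the red-green $CZ$ gates are left untouched because no Hadamard is applied to $R$ or $G$. Each Hadamard then annihilates its $\ket{+}$ factor via $H^{\dagger}\ket{+}=\ket{0}$, producing $\ket{0}^{\otimes B_u}$ and $\ket{0}^{\otimes(B_c\setminus G)}$.

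Next I would expand both control registers in the $Z$-eigenbasis, $\ket{+}^{\otimes R}=\sum_{\vec u}\ket{\vec u}$ and $\ket{+}^{\otimes G}=\sum_{\vec g}\ket{\vec g}$, which is legitimate because every operation still touching $R$ or $G$ is diagonal in their $Z$-basis (they occur only as controls or as targets of residual $Z$ phases). With the registers fixed at $\vec u,\vec g$ I would then evaluate the four surviving gate families separately. The red-green $CZ$ gates contribute the phase $\prod_{g}\omega^{g f_g}$ with $f_g=\sum_{r}\Gamma_{rg}u_r$, which is exactly the factor $\bigotimes_i Z_{g_i}^{f_{g_i}}$ of $\Delta$ acting on $\ket{\vec g}$. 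The red-$B_u$ controlled-$X$ gates reproduce Proposition \ref{thm-2col} on the $R\cup B_u$ subgraph, so that together with the red register $\ket{\vec u}$ one obtains $\ket{\vec u\,\mathcal{G}_{B_uR}}$. The red-$(B_c\setminus G)$ controlled-$X$ gates shift the $j$-th particle of $B_c\setminus G$ by $f_{b_j}$, giving the factor $\bigotimes_j X_{b_j}^{f_{b_j}}$ of $\Delta$. Finally the green-$(B_c\setminus G)$ controlled-$X$ gates are precisely the inner two-colorable graph, so by Proposition \ref{thm-2col} on the $G\cup(B_c\setminus G)$ subgraph they turn $\ket{0}^{\otimes(B_c\setminus G)}$, together with $\ket{\vec g}$, into $\ket{\vec g\,\mathcal{G}_{B_c\setminus G}}$.

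Finally I would assemble the pieces. Because the shifts from the red-$(B_c\setminus G)$ and green-$(B_c\setminus G)$ controlled-$X$ gates both land on the $B_c\setminus G$ register and all $X$-type operators commute, the $j$-th entry of that register becomes $(\vec g\,\mathcal{G}_{B_c\setminus G})_j+f_{b_j}$, which is exactly $X_{b_j}^{f_{b_j}}$ applied to $\ket{\vec g\,\mathcal{G}_{B_c\setminus G}}$. Grouping the red and $B_u$ registers into $\ket{\vec u\,\mathcal{G}_{B_uR}}$ and the green and $B_c\setminus G$ registers into $\Delta\sum_{\vec g}\ket{\vec g\,\mathcal{G}_{B_c\setminus G}}$, while keeping the sum over $\vec u$ outside, gives the claimed identity (\ref{eq:3colfinalresultv2special}). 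The step I expect to require the most care is the bookkeeping of the fact that $\Delta$ depends on $\vec u$ through $f_g$ and $f_{b_j}$ while the sum over $\vec u$ sits outside the product: one must check that after expanding $R$ the red indices are faithfully stored in the identity block of the first tensor factor $\ket{\vec u\,\mathcal{G}_{B_uR}}$, so that $\Delta(\vec u)$ is well defined term by term and the right-hand side is a genuine product state across the cut separating $R\cup B_u$ from $G\cup(B_c\setminus G)$ for each $\vec u$. The reason the residual $Z$ phases in $\Delta$ cannot be eliminated, in contrast to all the $X$-type shifts, is that no Hadamard is applied to $R$ or $G$, which is precisely the obstruction already encountered in Proposition \ref{thm-3col-general}.
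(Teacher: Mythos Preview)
Your proposal is correct and shares the same underlying idea as the paper's proof in Appendix~\ref{app:proof-3col-thm}: reduce the three-colorable state to two independent applications of Proposition~\ref{thm-2col} (one on $R\cup B_u$, one on $G\cup(B_c\setminus G)$), with the residual red--green and red--$(B_c\setminus G)$ couplings collected into the operator $\Delta$. Your organization is, however, more direct. The paper first applies $H^{\dagger}$ to \emph{all} of $B_c$ (including $G$), packages the inner $CZ$ gates into an auxiliary operator $\mathcal{O}_{b,b'\in B_c}=H^{\dagger\otimes B_c}(\prod CZ)H^{\otimes B_c}$, then evaluates $\mathcal{O}$ on $\bigotimes_{b\in B_c}\ket{f_b}$ by expanding with $H$, reapplying only $H^{\dagger\otimes B_c\setminus G}$, and invoking the Kronecker-delta identity~(\ref{eq:delta-prop}) to collapse the intermediate sums. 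You bypass this detour by applying $H^{\dagger}$ from the outset only on $B_u$ and $B_c\setminus G$, so that the red--green $CZ$ gates survive as pure phases and the two-colorable closed form on $G\cup(B_c\setminus G)$ appears immediately. Both routes arrive at the same expression; yours avoids the introduction of $\mathcal{O}$ and the delta-function resummation at the cost of having to verify the commutation of the two families of $X$-shifts on $B_c\setminus G$, which you handle correctly.
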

The proof of proposition \ref{thm-3col} is given in the Appendix \ref{app:proof-3col-thm}. At this juncture, an implication of this proposition must be underlined. Firstly, proposition \ref{thm-3col} can be extended to not only the assumed set-up of the three colorable graph states which is depicted in figure \ref{fig:3col-setup}, where we have only one set of $B_c$ particles. It is possible to have $s$ different numbers of $B_c$-like setups that need to obey the same rules. To make it evident, this idea is depicted in figure \ref{fig:multi-bc}. The extension of the proposition to an arbitrary number of $s$ different $B_c$-like sets is trivial:
\begin{align}
\label{eq:3colfinalresult-multi-bc}
    &  \bigotimes_{k=1}^{s}H^{\dagger\otimes B_{c,k}\setminus G_k} H^{\dagger\otimes B_u} \ket{\psi_{\text{3-color}}} = \notag \\
    & \sum_{\vec{u}=0}^{d-1} \ket{\vec{u}\mathcal{G}_{B_uR}}
    \bigotimes_{k=1}^{s}
    \Delta_{k}\sum_{\vec{g}^k=0}^{d-1}\ket{\vec{g_k}\mathcal{G}_{B_{c,k}\setminus G_k}},
\end{align}
where each term that got a $k$ subscript in the equation (\ref{eq:3colfinalresult-multi-bc}) is trivially extended for any subset $B_{c,k}$ where $k = 1,\cdots,s$ taking into account the definitions given in the equations (\ref{eq:3colfinalresultv2special}) and (\ref{eq:thm-delta-def}).

\begin{figure*}[htbp]
    \centering
    \includegraphics[width=0.7\linewidth]{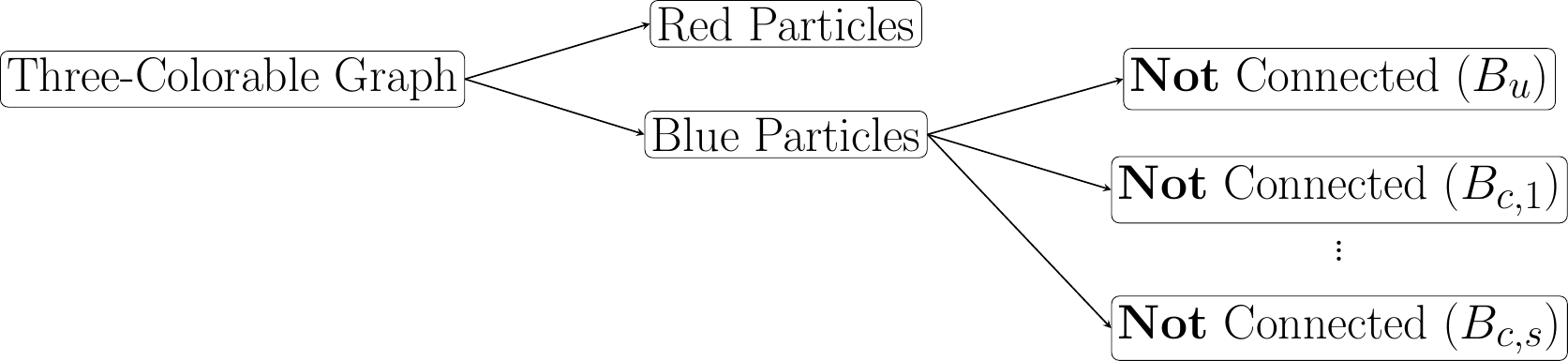}
    \caption{A schematic description of the creation of the most general three-colorable graph state studied in this work. In this case we have multiple $B_c$-like setups ranging from $B_{c,1}$ till $B_{c,s}$}
    \label{fig:multi-bc}
\end{figure*}
At this point, the significance of the proposition \ref{thm-3col} will be discussed. To begin with, similarly to the case of two-colorable graph states we are equipped with a step-by-step methodology to write the closed-form expression of the discussed three-colorable graph states. Let us present the aforementioned guide of how to use the formula:
\begin{itemize}
    \item Step 1: Identify the set every node belongs to, namely one should consider 4 sets: $G,R,B_{u},B_c\setminus G$ and ensure that the conditions $n_R\leq n_{B_{u}}$ and $n_G\leq n_{B_c\setminus G}$ are satisfied.
    \item Step 2: Assign free indices on the particles belonging to the $R$ and $G$ sets. Using these free indices, find the indices for the blue particles belonging in the $B_u$ and the $B_{c}\setminus G$, by adding the indices of their neighbors, precisely like it was done for the 2 colorable cases.
    \item Step 3: To find the $\Delta$ operator one has to apply the $Z$ operations to the green particles of the $B_c$ graph and the $X$ operations to the blue particles. Then the definition of the function $f_{k\in B_c} = \sum_{r\in R}\Gamma_{rk}u_r$ must be recalled.
    \item Step 4: The function $f_{k\in B_c} = \sum_{r\in R}\Gamma_{rk}u_r$ represents the coupling of the red particles with the green and blue particles belonging to the $B_c$ graph. For this reason, to find the functions $f_k$ for each particle in $B_c$, one has to find every red particle that are connected and for each case multiply the index assigned for a red particle in step 2 and multiply it with the weight between these two vertices. Then, for each particle belonging to the $B_c$ set one has to add these coefficients to determine each one of the $f_{k\in B_c}$.
    \item Step 5: Thanks to steps 3 and 4, the $\Delta$ operator is obtained. Therefore, the final step is to get the result from step 2 for the particles belonging in the $B_{u}$ and $R$ sets. Then, place the $\Delta$ operator found in step 4 and finally write the second result of step 2 which is the two-colorable closed-form expression for the $B_c$ subgraph and the formula of the desired state is found. 
\end{itemize}
A general idea of how proposition \ref{thm-3col} works is depicted in figure \ref{fig:3col-des}. 
\begin{figure*}[htbp]
    \centering
    \includegraphics[width=0.8\linewidth]{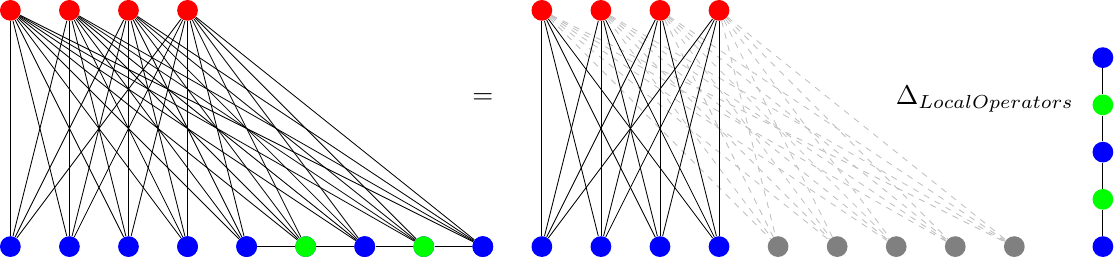}
    \caption{An example schematic representation of closed-form expression given in proposition \ref{thm-3col}. On the left-hand side, the three-colorable graph is shown. On the right-hand side, the two colorable subgraphs are presented. Between them $\Delta_{LocalOperators}$ is applied. This figure illustrates the process of obtaining the family of states satisfying the required conditions to apply the closed-form expression. In this example on the right-hand side, we have a completely bipartite graph (left of the $\Delta$) and a 1D cluster state (right of $\Delta$). However, this idea extends to any state that satisfies the corresponding conditions.}
    \label{fig:3col-des}
\end{figure*}
To make the implementation of the above steps transparent, let us present sequentially the application of proposition \ref{thm-3col} for two representative examples.

\textbf{Example 1: }As the first example, assume the graph state corresponding to the graph given in figure \ref{fig:3col-ex1}. The first step is to ensure the conditions $n_R\leq n_{B_u}$ and $n_G\leq n_{B_c\setminus G}$, which are obeyed since $n_R = 4$, $n_{B_u}=6$, $n_{B_c\setminus G} = 2$ and $n_G = 1$. Step 2 requires assigning free indices to the green and the red particles. Step 3 demands to find the indices for the blue particles, which is simply the addition of the free indices found in step 2 of the neighboring indices. This means, that for the connections involving the particles belonging to the $R$ and the $B_u$ set we have:
\begin{align}
    \sum_{i,j,k,l = 0}^{d-1} &\; \ket{i} \ket{j} \ket{k} \ket{l} \nonumber \\
    &\ket{i+j+k+l} \ket{i+j+k+l} \ket{i+j+k+l}\nonumber \\
    & \ket{i+j+k+l}\ket{i+j+k+l} \ket{i+j+k+l}.
\end{align}
Similarly, we have the term for the particles belonging to the $G$ and the $B_c\setminus G$ set:
\begin{equation}
    \sum_{g=0}^{d-1}\ket{g}\ket{g}\ket{g}.
\end{equation}
Step 3 requires to construct the $\Delta$ operator. For this, we know that on the blue particles of the $B_c$ graph, $X$ gates must be applied, while for the green particle a $Z$ gate. Step 4 requires determining the function $f_{k\in B_c}$, which here is easy since we have assumed that we have weights equal to 1:
\begin{equation}
    f_{k\in B_c} = 1\times i+1\times j+1\times k+1\times l,
\end{equation} 
where the multiplication with one is used to make explicit how the corresponding weight must be used. Bearing the above procedure in mind and combining every step the final state can be written as:
\begin{widetext}
    \begin{align}  
    \label{eq:3col-ex1-final}
        \ket{\psi_{\text{example-}1}} = 
        &\sum_{i,j,k,l = 0}^{d-1} \ket{i} \ket{j} \ket{k} \ket{l} 
        \ket{i+j+k+l} \ket{i+j+k+l} \ket{i+j+k+l} \ket{i+j+k+l} 
        \ket{i+j+k+l} \ket{i+j+k+l} \nonumber \\
        &\left(X_{x_1}^{i+j+k+l} \otimes Z_{x_2}^{i+j+k+l} 
        \otimes X_{x_3}^{i+j+k+l}\right) 
        \sum_{g=0}^{d-1} \ket{g}_{x_1} \ket{g}_{x_2} \ket{g}_{x_3}.
    \end{align}
\end{widetext}
It has to be underlined, that in equation (\ref{eq:3col-ex1-final}), the indices $x_1$ and $x_3$ represent the blue particles of the $B_c$ graph and $x_2$ the green particle. In fact, this example has a dual goal. The first was to demonstrate the detailed way to apply the closed-form formula presented in proposition \ref{thm-3col}. The second one is that if the $B_c$ graph is a one-dimensional cluster state, the closed-form expression becomes rather interesting. As we explain in the Appendix \ref{appen:1d-cluster}, in those cases, if $B_c$ has 2 particles, then the Bell basis is obtained. When we have 3 particles the GHZ basis is obtained. 
\begin{figure*}[htbp]
    \centering
    \begin{minipage}{0.25\linewidth}
        \centering
      \textbf{Step 1} \\[5pt]
        \includegraphics[width=0.5\linewidth]{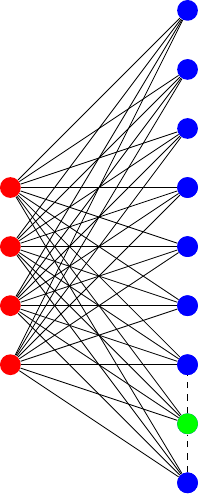} 
      
    \end{minipage}\hfill
    \begin{minipage}{0.25\linewidth}
        \centering
        \textbf{Step 2} \\[5pt]
        \includegraphics[width=0.7\linewidth]{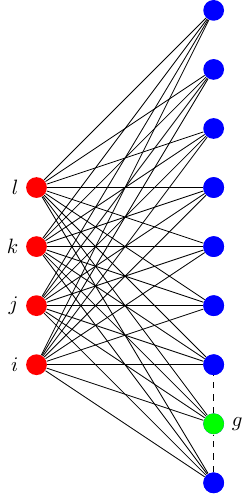} 
    \end{minipage}\hfill
    \begin{minipage}{0.25\linewidth}
        \centering
        \textbf{Step 3} \\[5pt]
        \includegraphics[width=\linewidth]{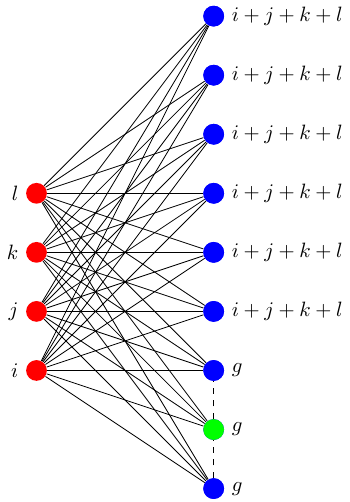} 
    \end{minipage}
    \caption{Example 1: Presentation of the first three steps of the algorithm to obtain the closed-form expression (\ref{eq:3col-ex1-final}) using proposition \ref{thm-3col}.}
    \label{fig:3col-ex1}
\end{figure*}

\textbf{Example 2:} Let us proceed with the presentation of the second example presented in figure \ref{fig:3col-ex2}. Here in this example, we have two different $B_c$ sets. This does not have any impact on the application of the step-by-step procedure of the algorithm. In this case, the condition $n_R\leq n_{B_u}$, $n_{G,1}\leq n_{B_c\setminus G,1}$, and $n_{G,2}\leq n_{B_c\setminus G,2}$, which they do since $n_R = 4$, $n_{B_u}=5$, $n_{B_c\setminus G,1} = 1$, $n_{G,1} = 1$, $n_{B_c\setminus G,2} = 1$ and $n_{G,2} = 1$. The second step is to assign the free indices to the red and the green particles. Here, we did so for the red particles, but we have assigned the free indices to the blue particles to make it crystal clear that if $n_{G,s} = n_{B_c\setminus G,s}$ for whatever $s$, then it is irrelevant where the indices will be assigned. In any case, one has to apply them to the same group of colors. Step 3 is straightforward and presented in figure \ref{fig:3col-ex2}. So, we have obtained the summations we intended to. For the connection between the $R$ and $B_u$ particles we have:
\begin{align}
    \sum_{i,j,k,l = 0}^{d-1} &\; \ket{i} \ket{j} \ket{k} \ket{l} \nonumber \\
    &\ket{i+j+k+l} \ket{i+j+k+l} \ket{i+j+k+l}\nonumber \\
    & \ket{i+j+k+l}\ket{i+j+k+l} ,
\end{align}
while for the $B_{c,1}$ and $B_{c,2}$ we have:
\begin{equation}
    \sum_{b_s=0}^{d-1}\ket{b_s}\ket{b_s},\quad \text{where} \quad s=1,2.
\end{equation}
Step 3 requires to construct the $\Delta$ operator. For this, we know that on the blue particles of each $B_c$ graph, $X$ gate must be applied, while for the green particles $Z$ gate. Step 4 requires determining the function $f_{k\in B_c}$, for both $B_c$ sets assumed in this, which here is easy since we have assumed that we have weights equal to 1:
\begin{equation}
    f_{k\in B_{c,s}} = i+j+k+l,\quad \text{where} \quad s=1,2.
\end{equation} 
Therefore, combining the above results for the graph states depicted in figure \ref{fig:3col-ex2}, the final form is obtained:
\begin{widetext} 
    \begin{align}
    \label{eq:3col-ex2-final}
        \ket{\psi_{\text{example-}2}} = 
        &\sum_{i,j,k,l = 0}^{d-1} \ket{i} \ket{j} \ket{k} \ket{l} 
        \ket{i+j+k+l} \ket{i+j+k+l} \ket{i+j+k+l} \ket{i+j+k+l} 
        \ket{i+j+k+l} \nonumber \\
        &\left(X^{i+j+k+l}\otimes Z^{i+j+k+l} 
         \sum_{b_1=0}^{d-1} \ket{b_1} \ket{b_1}\right)\otimes\left(X^{i+j+k+l}\otimes Z^{i+j+k+l} 
         \sum_{b_2=0}^{d-1} \ket{b_2} \ket{b_2}\right).
    \end{align}
\end{widetext}

\begin{figure*}[htbp]
    \centering
    \begin{minipage}{0.25\linewidth}
        \centering
      \textbf{Step 1} \\[5pt]
        \includegraphics[width=0.5\linewidth]{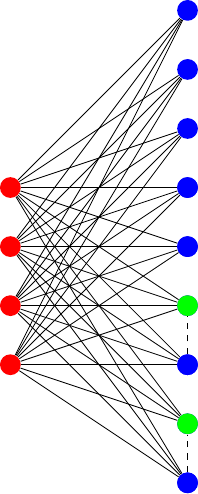}    
      
    \end{minipage}\hfill
    \begin{minipage}{0.25\linewidth}
        \centering
        \textbf{Step 2} \\[5pt]
        \includegraphics[width=0.7\linewidth]{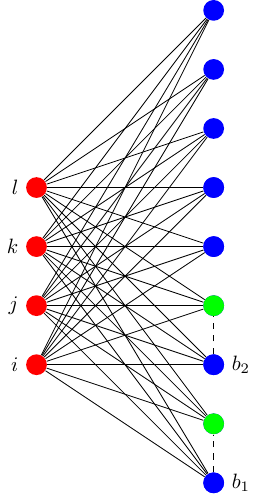} 
    \end{minipage}\hfill
    \begin{minipage}{0.25\linewidth}
        \centering
        \textbf{Step 3} \\[5pt]
        \includegraphics[width=\linewidth]{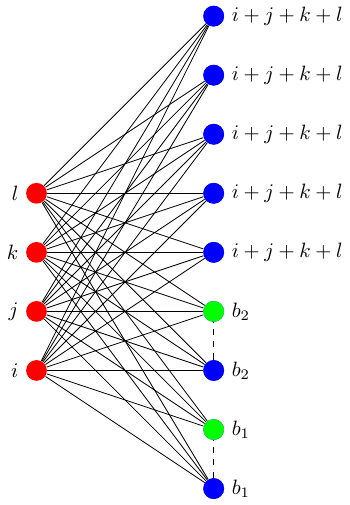} 
    \end{minipage}
    \caption{Example 2: Presentation of the first three steps of the algorithm to obtain the three-colorable closed-form expression (\ref{eq:3col-ex2-final}) using proposition \ref{thm-3col}.}
    \label{fig:3col-ex2}
\end{figure*}

Following the thought process of the implications of the two and three-colorable results, it is unavoidable to discuss the Schmidt measure, for the theorized graph state setup case. As it is explained in the corresponding part of sections \ref{section:2-col} and \ref{section:3col-special} this discussion holds for any local dimension $d$. A bound on Schmidt measure is already discussed in the corresponding part of \ref{section:3col-gen} and it is true also in this case as long as the stricter condition explained is imposed. For this, one should take into consideration the discussion after proposition 7 in \cite{Eisert}. The reasoning is that every graph, which is not two-colorable can become two-colorable if an adequate number of odd circles are deleted since this corresponds to $Z$ measurements. Therefore, we can adapt the equation (\ref{eq:SMbound1}) as follows:
\begin{equation}
\label{eq:SMbound1-3col}
     E_s(\ket{\psi_{\text{3-color}}})\leq\lfloor\frac{n+K}{2}\rfloor,
\end{equation}
where $K$ is the number of vertices deleted $n_{B_u}+n_R+n_G+n_{B_c\setminus G} = n$.

It straightforward to grasp that for a graph state with adjacency matrix $\Gamma_{\text{3-color}}$ given by equation (\ref{eq:gamma-3col-studied}) the lower bound on Schmidt measure, should be at least dictated by the $\text{min}\left(n_R,n_{B_u}\right)$ and $\text{min}\left(n_{B_c\setminus G},n_G\right)$ therefore we have that:
 \begin{equation}
\text{min}\left(n_R,n_{B_u}\right)+\text{min}\left(n_{B_c\setminus G},n_G\right)\leq E_s(\ket{\psi_{\text{3-color}}}),
\end{equation}
which by our construction leads to:
\begin{equation}
\label{eq:SMbound3-3col}
n_R+n_G\leq E_s(\ket{\psi_{\text{3-color}}}).
\end{equation}
The above equation holds in the assumed setup because we have imposed that $n_R\le n_{B_u}$ and $n_G\le n_{B_c\setminus G}$ which leads to the condition $n_R+n_G\le n_B$. Therefore combining equations (\ref{eq:SMbound1-3col}) and (\ref{eq:SMbound3-3col}):
\begin{equation}
\label{eq:SMbound-3col-final}
     n_R+n_G\leq E_s(\ket{\psi_{\text{3-color}}})\leq\lfloor\frac{n + K}{2}\rfloor.
\end{equation}
This result is crucial for the impact of proposition \ref{thm-3col} since the number of the terms we need to express the graph state with adjacency matrix (\ref{eq:gamma-3col-studied}) was found to be $d^{n_R}\times d^{n_G} = d^{n_R+n_G}$. By the above discussion, we explained that for the special class of states studied, the derived closed-form expression has the minimal number of terms in the Z-eigenbasis

The described three-colorable graph state with adjacency matrix, given by equation (\ref{eq:gamma-3col-studied}), is, according to the proven proposition \ref{thm-3col}, on the form of a QOA, therefore every scenario, in which the QOAs are utilized, becomes automatically an interesting area for the assumed family of three-colorable graph states. To be more precise, we refer the reader to the equation (38) of the reference \cite{zahra_qoa}. In this work, it is explained that we can multiply the so-called quantum columns of a QOA. This idea is also apparent in the equation \ref{eq:3colfinalresult-multi-bc} of this work. In section \ref{section:3col-gen}, we have explained that every three-colorable graph state is a QOA. In this section, by bearing in mind the equation (\ref{eq:3colfinalresult-multi-bc}) we conclude the same result. In fact, with this methodology, we are presenting a general way to construct QOAs, since we just have to assume several $B_c$-like graphs that either have 2 or 3 particles. In reality, we suspect that we can construct states that are QOAs by concatenating Bell, GHZ, and other AME states of a lower number of particles in the quantum part of the QOA, but this is a matter of future research. Graphically this means one can use multiple $B_c$-like graphs in the original two-colorable one. The second reason why this setup is important is connected with the construction of $k$-uniform states. For this, not only the discussion in this paragraph must be considered, but also the results presented in \cite{zahra_adam_kuni}, but these will be presented in \ref{section:k-uni}. 

A final remark is about the initial condition imposed on the number of $n_R$ and $n_{B_{u}}$ particles, which must satisfy $n_R\leq n_{B_u}$. For this, we have to recall the fact that we can always make an arbitrary graph, two-colorable if the adequate number of odd-number circles are removed from a non-two-colorable graph. In this case, one has to recall the equation (\ref{eq:SMbound1-3col}) and the fact that $K$ is the number of vertices removed. What we carried out in practice was the reverse process, where we started with a two-colorable graph and we started adding new edges. The physical implication of the demand that the bound $n_R\leq n_{B_u}$ is not violated, is that in case it is, then the entanglement properties of the new three-colorable graph state have changed radically and the correlation with the entanglement properties of the initial two-colorable case is no longer apparent. This result is completely consistent with the fact that the entanglement properties of a state can change only with joint operators. Last but not least, recalling that almost all two-colorable graph states have maximal Schmidt measure \cite{Severini_2006} for the qubit case, it is rational to examine if it is possible to transform the three-colorable graph state we examined to a two-colorable state. This question will be addressed in the next section.

\section{Local Transformations between Two- and Three-Colorable Graph States}\label{section:LT}

\subsection{Short overview on local transformations}
Entanglement characterization is crucial in the study of graph states, making the examination of equivalence classes a natural problem. Moreover, graph states frequently arise in quantum networks, where each vertex represents a spatially separated qudit or laboratory, preventing joint operations \cite{guhne_marginals}.  Consequently, quantum operations $\mathcal{E}$ must be considered as a subclass of completely positive maps (CPMs) that remain separable under the finest partitioning. We are interested in the transformation from state $\ket{\psi_1}$ to $\ket{\psi_2}$ with a non-zero probability, if a CPM $\mathcal{E}$ is adopted. Characterizing the complete class of every transformation $\mathcal{E}$ which belongs to the Local Operations and Classical Communications (LOCC) is broad and generally challenging. Usually, it is assumed that:
\begin{equation}
    \mathcal{E}(\rho) = \bigotimes_{j=1}^{N}\mathcal{E}_j\rho\bigotimes_{j=1}^{N}\mathcal{E}_j^{\dagger},
\end{equation}
which means that: $\ket{\psi_2} = \bigotimes_{j=1}^{N}\mathcal{E}_j\ket{\psi_1}$. For clarification:
\begin{equation}
    \bigotimes_{j=1}^{N}\mathcal{E}_j = \mathcal{E}_1\otimes\cdots\otimes\mathcal{E}_N.
\end{equation}
At this point, let us present three dissimilar classes of local operations for the case of local dimension $d=2$ \cite{gstates_review}. The first one is the invertible Stochastic Local Operations assisted by SLOCC \cite{GHZ_vs_W}, which means that the operation performed in each qudit is of the form $\mathcal{E}_i\in\text{SL}(2,\mathbb{C})$. The probability of achieving an SLOCC transformation is typically less than one. Then, the LU equivalence must be taken into account, which means that each operation on every qudit is $\mathcal{E}_i\in U(2)$. Finally, the LC are operations on each qudit such that $\mathcal{E}_i\in \mathcal{C}_1$, which means they are one of the Clifford unitaries \cite{quditsLCbahramgiri2007graphstatesactionlocal}. The last two have the probability to be achieved, if they exist, equal to unity.
At this point, deepening our focus on the LC equivalence is vital to understanding the consequences of propositions \ref{thm-2col} and \ref{thm-3col}. The generalized Pauli group is generated as follows \cite{quditsLCbahramgiri2007graphstatesactionlocal}:
\begin{equation}
    \mathcal{P} = \{\omega^aX^bZ^c\},\quad \text{with} \quad a,b,c\in\mathbb{F}_d.
\end{equation}
The n-body Pauli group $\mathcal{P}_n$ is defined as the tensor product of $\mathcal{P}$. Then, the Clifford group for n particles $\mathcal{C}_n$ can be presented, which is the normalizer of the $\mathcal{P}_n$.
\subsection{Local Complementation, LC equivalence, and two-colorable closed-form expression}
It is noteworthy that for qubits, there is a simple graphical rule to determine and examine if two graph states are LC equivalent \cite{Van_den_Nest_2004} and is called Local Complementation \cite{BOUCHET199375}. The rule is rather simple, one has to choose a vertex, let us assume we call it $A$. Then, the neighborhood $N_A$ of $A$ must be checked. If two vertices are connected, the edges are deleted, if they are not, a connection is established. Additionally, it was shown in \cite{Eisert} that the aforementioned rule is as follows:
\begin{equation}
    \label{eq:lcqubits}
    \text{LC}_A = \sqrt{-iX_A}\bigotimes_{b\in N_A}\sqrt{iZ_b}.
\end{equation}
The idea of a graphical rule to determine LC equivalency has been extended to qudits \cite{quditsLCbahramgiri2007graphstatesactionlocal}, and the corresponding local transformations are found in \cite{qudit_LC_1,qudit_LC_2}. It must be noted that, if two graph states are not LC equivalent, this does not imply that they are also not LU equivalent if the number of particles is above 8, as it was shown in \cite{ji2008lulcconjecturefalse}. In \cite{guhne_lclu}, an approach to construct examples for this case is presented. Finally, one has to bear in mind that the application of local Clifford operations on a graph state does not necessarily lead to a graph state directly, but in general, it should be a stabilizer state \cite{guhne_marginals}. But since stabilizer states can be realized as graph states, as it was shown in \cite{eisert_schlingemann2001}, every stabilizer state is LC equivalent to some graph state. A final remark is that, even if an efficient algorithm for graphs has been developed \cite{BOUCHET199375}, as well as the algorithm for the determination of two stabilizer states are LC equivalent, the endeavor of increasing the number of particles and finding the LC orbit for a large number of particles turns out to be a challenging goal \cite{lcrobits}.

In any case, and especially for a small number of particles, it is always a good idea to find whether a three-colorable graph can be transformed into a two-colorable one by local complementation and then apply the formula found in proposition \ref{thm-2col}. Additionally, in scenarios where the closed-form formula for the three-colorable states can not be used, for instance, if $n_{B_u}\leq n_R$, local complementation can be used to obtain a two-colorable graph and then apply our methodology. An example, in this case, is given by figure \ref{fig:LC-ex}, where local complementation on the particle indicates that $k$ eliminates the connection between the particles $b_1$ and $g$, transforming it into a two-colorable graph state, where the closed-form formula is known. It should be noted that the LC equivalency is something we can suspect from the fact that the Schmidt measure, for the left and right states, is the same. In the case of two colorable graphs, the red particles are 3 so the Schmidt measure is $3$ and the minimal number of terms in the Z-eigenbasis is $d^3$. In the case of the three colorable graph states we have to modify, without loss of generality, the hierarchy for the number of vertices with respect to the color as $n_G\le n_B\le n_R$. In this case, the Schmidt measure is $n_G+ n_B$, which is again $3$ and the minimal number of terms in the Z-eigenbasis is $d^3$.
\begin{figure}
    \centering
    \includegraphics[width=0.8\linewidth]{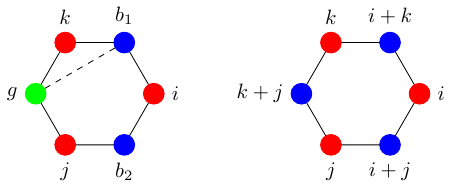}
    \caption{An example usage of the Local Complementation to transform a three-colorable graph to a two-colorable graph. On the left, the initial three-colorable graph is given. On the right, the implementation of the method for the closed-form expression is performed.}
    \label{fig:LC-ex}
\end{figure}

\subsection{Association of the presented framework with $k$-uniform states}\label{section:k-uni}
The goal of this subsection is to indicate that the three-colorable graph states, assumed in \ref{section:3col-special}, with the adjacency matrix given by equation \ref{eq:gamma-3col-studied}, arise naturally in the case of the $k$-uniform states. In a few words, $k$-uniform states are highly entangled pure states and their property is that all of their k-qudit reduced density matrices are maximally mixed \cite{kuni-1,kuni-2,kuni-happy-paper,zahra_adam_kuni,QOA-def1,zahra_qoa}. The way to construct the $k$-uniform graph states, presented in \cite{zahra_adam_kuni}, fits exactly into the framework of the states presented in \ref{section:3col-special}. To elaborate on that, it is stated that one has to assume graph states with adjacency matrix given by the following matrix:
\begin{equation}
\label{eq:gamma-k-uni}
\Gamma_{k\text{-uniform}} = \left(\begin{array}{c|c} 0 & A \\ \hline A^T & B \end{array}\right).
\end{equation}
The only requirement is that the matrix $A$, as well as each one of its submatrices, are non-singular while the block matrix $B$ is not limited by anything. It is clear that for $B=0$, a two-colorable graph state is obtained and in proposition \ref{thm-2col}, the closed-form expression has been given. Additionally, if it is assumed that the matrix $B$ has the following structure:
\begin{equation}
\label{eq:gamma-B}
B = \left(\begin{array}{c|c} 0 & B_1 \\ \hline B_1^T & 0   \end{array}\right),
\end{equation}
where $B_1$ and each one of its sub-matrices is non-singular, the so-called hierarchical graph states are created. In this case, the graph state assumed, is simply a special case of the three-colorable graph state described by the adjacency matrix (\ref{eq:gamma-3col-studied}). In \cite{zahra_adam_kuni}, two vital results regarding the SLOCC equivalence are described. In their work, it is interpreted under the scope of $k$-uniform states, but as we explained, the colorability understanding is essentially the same. In both cases, the conditions regarding the $A$ block of the matrix (\ref{eq:gamma-k-uni}) must hold. 

They showed that the $k$-uniform state with adjacency matrix given by (\ref{eq:gamma-k-uni}) and $B=0$, and the state, given by the same adjacency matrix, and the block $B$ given by matrix (\ref{eq:gamma-B}), do not belong in the same SLOCC class. In our perspective, this means that we need to consider a general two-colorable graph state described by the matrix (\ref{eq:gamma-2-col}) and to assume that every submatrix of $A_{RB}$ is nonsingular. The second stage is to assume we have the three colorable graphs using our framework described by the matrix (\ref{eq:gamma-3col-studied}) and just impose that every submatrix of $G_{B_c\setminus G}$ is non-singular. Then, these two states do not belong to the same SLOCC class. 

The second example is to assume once again a two-colorable graph state, described by the adjacency matrix (\ref{eq:gamma-2-col}), and impose the conditions to have a $k$-uniform state. Then, the second graph state is the same as the first one, but the $B_c$ graph has only two particles. Then, if the total number of particles is odd, these two states are not SLOCC equivalent. These two examples showcase that, indeed in some cases, it is not possible to transform the theorized 3 colorable graph states described by the adjacency matrix (\ref{eq:gamma-3col-studied}) to their corresponding two-colorable graph state with the adjacency matrix given by (\ref{eq:gamma-2-col}).

\section{$\chi$-Colorable graph states}\label{section:x-colorable}
Generalizing the formalism developed so far, we extend our analysis to graph states with arbitrary colorability $\chi$. This requires considering graphs with a chromatic number $\chi$, where the vertex set is partitioned into $\chi$ disjoint color classes. Instead of assigning specific names to the colors, we denote them as color 1, color 2, up to color $\chi$, forming corresponding sets of vertices labeled as $c_{1}, c_{2}, \dots, c_{\chi}$. Similarly, the number of vertices in each set $c_l$ (where $l \in \{1,2,\dots,\chi\}$) is denoted as $n_l$. Without loss of generality, we impose the ordering $n_1\le n_2\le\cdots\le n_{\chi-1}\le n_{\chi}$. The total number of vertices in the graph is then given by:
\begin{equation}
n = \sum_{j=1}^{\chi} n_j\, .
\end{equation}

The adjacency matrix of a $\chi$-colorable graph state is structured as follows:

\begin{equation}
\label{eq:gamma-chi-col}
\Gamma_{\chi\text{-color}} = \left(\begin{array}{c|c|c|c}
  \smash{\overbrace{0}^{c_1}} & \mathbf{\smash{\overbrace{A_{c_2,c_1}}^{c_2}}} & \mathbf{\smash{\overbrace{\cdots}^{\cdots}}} & \mathbf{\smash{\overbrace{A_{c_{\chi},c_1}}^{c_{\chi}}}} \\ \hline 
  A_{c_2,c_1}^T & 0 & 
  \begin{array}{c}
  \vdots \\ 
  \ddots \\ 
  \cdots
  \end{array} & \vdots \\ \hline
 \vdots & 0 & \ddots & A_{c_{\chi},c_{\chi-1}} \\ \hline
  A_{c_{\chi},c_1}^T & 0 & A_{c_{\chi},c_{\chi-1}}^T & \textbf{0} \\ 
\end{array}\right).
\end{equation}
A $\chi$-colorable graph state is defined as:
\begin{equation}
\label{eq:x-col-state}
    \ket{\psi_{\chi\text{-color}}} = 
    CZ_{\chi} 
    \ket{+}^{\otimes c_1}\ket{+}^{\otimes c_2}\cdots\ket{+}^{\otimes c_{\chi}}\, .
\end{equation}
where $CZ_{\chi}$ is defined as:
\begin{equation}
    CZ_{\chi}= \prod_{1 \leq i < j \leq \chi} \prod_{p\in c_i, q\in c_j} C_{p}Z_{q}^{\Gamma_{p,q}}.
\end{equation}
For our upcoming analysis, it is also useful to define the following quantity,
\begin{equation}
    \label{eq:z-x-gen-x-comp}
    Z_{\chi-1} = \bigotimes_{i=2}^{\chi-1} \bigotimes_{j=1}^{n_{c_i}} Z_{c_i,j}^{\sum_{k=1}^{i-1} \overrightarrow{v}^{c_k} \cdot \mathbf{(A_{c_i,c_k})_{c_i,j}}},
\end{equation}
where the object $\mathbf{(A_{c_i,c_k})_{c_i,j}}$ is the generalization of the vector defined in proposition \ref{thm-3col-general}. Namely, this symbolizes the vector created by the extraction of the column of the block matrix $A_{c_i,c_k}$ that corresponds to the $j$-th particle with color $c_i$. We proceed with our analysis by defining a set of vectors as follows:
\begin{equation}
    \label{eq:x-col:vectors}
    \vec{v}^{c_m} = (v_1^{c_m},\cdots,v_{n_{c_m}}^{c_m}), \quad \text{where}, \quad m\in\{1,\cdots,\chi\}.
\end{equation}
Let us denote the concatenation of every vector $\vec{v}^{c_l}$, with $l\in\{1,\cdots,\chi-1\}$, starting from $\vec{v}^{c_1}$ till $\vec{v}^{c_{\chi-1}}$ as follows:
\begin{equation}
    \label{eq:x-col-vt}
    \vec{V}_{\chi}= (\vec{v}^{c_1},\cdots,\vec{v}^{c_{\chi-1}})
\end{equation}
It will also be useful to note that we are going to denote as $\mathbb{I}_{\chi}$ the identity matrix of dimension $\sum_{j=1}^{\chi-1}n_{c_j}$. Finally, we define a generator-like matrix $\mathcal{G}_{\chi}$, with dimensions $\left(\sum_{j=1}^{\chi-1}n_{c_j}\right)\times \left(\sum_{j=1}^{\chi}n_{c_j}\right)$ as:
\begin{equation}
    \label{eq:Generator-matrix-c-col}
    \mathcal{G}_{\chi} = \begin{bmatrix} 
    \mathbb{I}_{\chi} & \Bigg| & \begin{bmatrix} A_{c_{\chi},c_1} \\ \vdots \\ A_{c_{\chi},c_{\chi-1}} \end{bmatrix}
    \end{bmatrix},
\end{equation}
Bearing the above definitions into mind, we have the following proposition, which is a direct generalization of the results obtained in the previous sections.
\begin{proposition}\label{thm-x-col}
A qudit $\chi$-colorable graph state defined in equation \ref{eq:x-col-state} with the corresponding adjacency matrix of the graph defined in (\ref{eq:gamma-chi-col}) can be written as:
\begin{equation}
    \label{eq:CFE-every-colorability}
    H^{\dagger\otimes c_{\chi}}\ket{\psi_{\chi\text{-color}}} = \sum_{\vec{V}_{\chi}=0}^{d-1}Z_{\chi-1}\ket{\vec{V}_{\chi}\cdot \mathcal{G}_{\chi}},
    \end{equation}
    with every element used in this proposition defined from the beginning of this section.
\end{proposition}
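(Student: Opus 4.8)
The plan is to mirror the calculations used for Propositions \ref{thm-2col} and \ref{thm-3col-general}, now treating the largest color class $c_\chi$ as the analogue of the ``target'' vertices to which $H^\dagger$ is applied, and the remaining classes $c_1,\dots,c_{\chi-1}$ as the ``control'' part carrying the free indices $\vec V_\chi$. First I would split the global entangler $CZ_\chi$ into two commuting groups: the operations $\prod_{i<\chi}\prod_{p\in c_i,q\in c_\chi}C_pZ_q^{\Gamma_{p,q}}$ that touch a vertex in $c_\chi$, and the operations $\prod_{1\le i<j\le\chi-1}\prod_{p\in c_i,q\in c_j}C_pZ_q^{\Gamma_{p,q}}$ acting only among the first $\chi-1$ classes. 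Since $H^{\dagger\otimes c_\chi}$ is local to $c_\chi$, it commutes freely through the second group and, via $H^\dagger C_pZ_q^{\Gamma_{p,q}}=C_pX_q^{\Gamma_{p,q}}H^\dagger$ (cf. (\ref{eq:hdagczcomm})), converts every $CZ$ in the first group into a $CX$.

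Next I would invoke $H^{\dagger\otimes c_\chi}\ket{+}^{\otimes c_\chi}=\ket{0}^{\otimes c_\chi}$ to collapse the last register to the computational zero state, then expand the surviving factors $\ket{+}^{\otimes c_1}\cdots\ket{+}^{\otimes c_{\chi-1}}$ as a sum over the concatenated index vector $\vec V_\chi=(\vec v^{c_1},\dots,\vec v^{c_{\chi-1}})$. Applying the $CX$ operations then shifts each target qudit $(c_\chi,j)$ from $\ket{0}$ to $\ket{\sum_{k<\chi}\sum_{p\in c_k}\Gamma_{p,(c_\chi,j)}v_p}$; assembling these shifts together with the untouched control indices is precisely the matrix product encoded by $\mathcal{G}_\chi=[\mathbb{I}_\chi \,|\, [A_{c_\chi,c_1};\dots;A_{c_\chi,c_{\chi-1}}]]$, producing the ket $\ket{\vec V_\chi\cdot\mathcal{G}_\chi}$.

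The remaining task is to account for the second group of $CZ$ operations, which act diagonally on the control registers. For each edge joining $p\in c_k$ to $q=(c_i,j)$ with $k<i\le\chi-1$, the symmetry of $CZ$ lets me rewrite the phase it contributes as a single-qudit operator $Z_q^{\Gamma_{p,q}v_p}$ placed on the higher-indexed class $c_i$. Collecting, for each target $(c_i,j)$, the contributions of all strictly earlier classes gives the exponent $\sum_{k=1}^{i-1}\vec v^{c_k}\cdot(A_{c_i,c_k})_{c_i,j}$, and the tensor product over all $(c_i,j)$ with $2\le i\le\chi-1$ reproduces exactly the operator $Z_{\chi-1}$ of (\ref{eq:z-x-gen-x-comp}). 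Combining the three contributions yields the claimed identity.

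I expect the principal obstacle to be bookkeeping rather than anything conceptual: one must verify that the column-extraction notation $(A_{c_i,c_k})_{c_i,j}$ matches the sum over neighboring control vertices, and that assigning every inter-class $Z$ phase to the higher-indexed endpoint is consistent and exhausts each edge among $c_1,\dots,c_{\chi-1}$ exactly once. Unlike the interactions feeding $c_\chi$, these residual phases cannot be removed by a further Hadamard, so the number of terms stays $d^{\sum_{j=1}^{\chi-1}n_{c_j}}$; confirming that no double-counting or index mismatch arises in the nested sums is the only delicate point, and the result then follows as a direct generalization of the $\chi=2,3$ cases.
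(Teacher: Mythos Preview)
Your proposal is correct and follows essentially the same route as the paper: the paper's own proof simply states that the result is obtained by adopting the same steps as in Proposition~\ref{thm-3col-general}, which is exactly the splitting of $CZ_\chi$ into the $c_\chi$-touching and non-$c_\chi$ parts, commuting $H^{\dagger\otimes c_\chi}$ through via (\ref{eq:hdagczcomm}), collapsing the $c_\chi$ register, expanding the remaining $\ket{+}$ factors over $\vec V_\chi$, and then packaging the residual inter-class $CZ$ phases into $Z_{\chi-1}$. Your remarks about the bookkeeping (assigning each phase to the higher-indexed endpoint and matching the column-extraction notation) are precisely the only points one has to check, and they go through just as in the three-color case.
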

\begin{proof}
    The proposition \ref{thm-x-col} is remarkably straightforward to prove, if the same steps of the proof for proposition \ref{thm-3col-general} are adopted.  
\end{proof}
It is notable that for $\chi=3$, we are reproducing the closed-form expressions presented in the propositions \ref{thm-2col} and \ref{thm-3col-general} respectively. Following the thought process of displaying the implications of each proposition presented, we start by equipping the reader with the step-by-step guide to write the closed-form expression of a $\chi$-colorable graph state.
\begin{itemize}
    \item Step 1: Identify the color of each vertex, with colors starting from $c_1$ till $c_{\chi-1}$ and assign free indices to them.
    \item Step 2: The indices of the vertices with color $c_{\chi}$ will be the summation of the neighboring indices with color $c_l$, where $l\in\{1,\cdots,\chi-1\}$ multiplied with the corresponding element of the adjacency matrix given by equation (\ref{eq:gamma-chi-col}).
    \item Step 3: The exponent of the Z operator for every vertex with color $c_2$ will be the multiplication of their free indices with the free indices of the color $c_1$ weighted with the corresponding element of the adjacency matrix given by equation (\ref{eq:gamma-chi-col}).
    \item Step 4: We proceed iteratively with the exponents of the Z gates for every element with a color $c_l$  where here $l\in\{2,\cdots,\chi-1\}$ and we multiply their free indices with the free indices of every vertex with color $c_{l-1}$  where $l\in\{2,\cdots,\chi-1\}$.
    \item Step 5: We sum over every element of the concatenated vector $\vec{V}_{\chi}$, which has been defined in equation (\ref{eq:x-col-vt}).
\end{itemize}
Applying the same philosophy as to why every three-colorable is a QOA, it is straightforward to understand that, given the definitions of $\vec{V}_{\chi}$ (\ref{eq:x-col:vectors}) and $\mathcal{G}_{\chi}$ (\ref{eq:Generator-matrix-c-col}), the mathematical object $\vec{V}_{\chi}\cdot\mathcal{G}_{\chi}$ forms an OA. Taking into account that we have to apply $Z_{\chi}$ and sum-over, we can understand that a $\chi$-colorable graph state can be written as a QOA with the following dimensions: 
\begin{equation}
\text{QOA}\left(\sum_{j=1}^{\chi-1}n_{c_j},\sum_{j=1}^{\chi}n_{c_j},d,k\right)= \text{QOA}(n-n_{\chi},n,d,k),
\end{equation}
where $k$ satisfies the following relation \cite{zahra_qoa}:
\begin{equation}
    k\, \text{Tr}_{p_1,\cdots,p_{n-k}}\left(\ket{\psi_{\chi-\text{color}}}\bra{\psi_{\chi-\text{color}}}\right)=\left(\sum_{j=1}^{\chi-1}n_{c_j}\right)\mathbb{I}_k,
\end{equation}
for every subset of $n-k$ parties $\{p_1,\cdots,p_{n-k}\}$.

In this work, we have already described multiple times how to derive strict lower bounds on the Schmidt measure of a graph state and how to connect it with the minimum number of terms needed to write the state in the Z-eigenbasis. Therefore it is straightforward to state that,
\begin{equation}
    \label{eq:SM-x-col}
    \sum_{j=1}^{\chi-1}n_{c_j}\le E_s(\ket{\psi_{\chi-\text{color}}}),
\end{equation}
as long as,
\begin{equation}
\label{eq:SM-x-col-con}
    \sum_{j=1}^{\chi-1}n_{c_j}\le n_{c_{\chi}}.
\end{equation}

Equipped with these results, we point out some vital remarks regarding the lower bounds on Schmidt measure. It is well known that the graph states with a star-shape graph ($\chi=2$) is LC equivalent to the GHZ state \cite{gstates_review}. It is also trivial to understand that the completely connected graph with $n$ vertices has $\chi=n$. It is well established that the completely connected graph state is LC equivalent to the GHZ state, which means that it is also LC equivalent to the star-shaped graph. Therefore, these three states must have the same Schmidt measure. One should keep in mind that graph states corresponding to the completely connected graph do not satisfy equation (\ref{eq:SM-x-col}) since the condition (\ref{eq:SM-x-col-con}) is not fulfilled. This makes the discussion very interesting because it illustrates that the endeavor of finding the stricter lower bound for every $\chi-$colorable graph state is a much more complicated task. This endeavor is a matter of future research and we hope that the insights given in this work will assist in this.

\section{Conclusions and outlook}\label{section:conclusion}

In this work, we have established a systematic framework for deriving closed-form expressions for two- and three-colorable graph states, providing an efficient and intuitive method to represent these states with minimal terms in the $Z$-eigenbasis. We further investigated a special subclass of three-colorable graph states, revealing their natural connection to $k$-uniform states. Extending our framework, we formulated a general approach for constructing and analyzing $\chi$-colorable graph states. By leveraging graph state colorability, we demonstrated deep connections between QOAs, LC equivalence classes, and multipartite entanglement structure. Our results show that every two-colorable graph state is LC-equivalent to a state expressible via an OA, while all graph states with $\chi>2$ are LC-equivalent to a QOA, establishing a direct link between combinatorial designs and quantum information theory. 

Beyond formal classification, our findings provide practical tools for constructing highly entangled quantum states. The introduced framework enables a compact representation, with direct applications in quantum networks, quantum error correction, and measurement based quantum computing. Additionally, our generalization to arbitrary $\chi$-colorability opens a pathway to systematically characterizing multipartite entanglement classes in graph-based quantum states.

This work opens multiple research directions. While we derived lower bounds for two- and three-colorable graph states and extended these results to higher $\chi$, an open challenge remains to determine the tightest possible bounds for $\chi$-colorable graph states and their implications for entanglement distribution in quantum networks. Another fundamental question is the refinement of QOAs. Given that we provided evidence that every OA corresponds to a two-colorable graph state, while QOAs correspond to higher chromatic number states, a deeper mathematical exploration is needed to tighten the connection between QOAs and multipartite entanglement.

By integrating concepts from graph theory, combinatorial designs, and quantum entanglement theory, our work lays a conceptual and practical foundation for the systematic study of graph states. The link between colorability, orthogonal arrays, and multipartite entanglement provides a novel perspective for constructing and analyzing highly entangled states. We anticipate that these results will be instrumental in advancing quantum state classification, quantum error correction, and scalable quantum computing architectures.

\acknowledgements

We thank Edwin Barnes, Sophia Economou, Mario Flory, Markus Grassl, Otfried G\"{u}hne, Barbara Kraus, Jan Sperling, G\'{e}za T\'{o}th, and Karol \.{Z}yczkowski for their valuable discussions and remarks. This work was supported by the Equal Opportunity Program, Grant Line 2: Support for Female Junior Professors and Postdocs through Academic Staff Positions, 14th funding round of Paderborn University.

\bibliography{Colorable-Graph}

\begin{thebibliography}{49}
\providecommand{\natexlab}[1]{#1}
\providecommand{\url}[1]{\texttt{#1}}
\expandafter\ifx\csname urlstyle\endcsname\relax
  \providecommand{\doi}[1]{doi: #1}\else
  \providecommand{\doi}{doi: \begingroup \urlstyle{rm}\Url}\fi

\bibitem[Horodecki et~al.(2009)Horodecki, Horodecki, Horodecki, and Horodecki]{Horodecki_2009}
Ryszard Horodecki, Paweł Horodecki, Michał Horodecki, and Karol Horodecki.
\newblock Quantum entanglement.
\newblock \emph{Reviews of Modern Physics}, 81\penalty0 (2):\penalty0 865–942, June 2009.
\newblock ISSN 1539-0756.
\newblock \doi{10.1103/revmodphys.81.865}.
\newblock URL \url{http://dx.doi.org/10.1103/RevModPhys.81.865}.

\bibitem[Datta et~al.(2022)Datta, Kondra, Miller, and Streltsov]{intro-QEC-1}
C.~Datta, Tulja~Varun Kondra, Marek Miller, and A.~Streltsov.
\newblock Entanglement catalysis for quantum states and noisy channels.
\newblock \emph{Quantum}, 2022.

\bibitem[Scott(2004)]{Scott-2004}
A.~J. Scott.
\newblock Multipartite entanglement, quantum-error-correcting codes, and entangling power of quantum evolutions.
\newblock \emph{Phys. Rev. A}, 69:\penalty0 052330, May 2004.
\newblock \doi{10.1103/PhysRevA.69.052330}.
\newblock URL \url{https://link.aps.org/doi/10.1103/PhysRevA.69.052330}.

\bibitem[Calderbank et~al.(1998)Calderbank, Rains, Shor, and Sloane]{intro-QEC-3}
A.R. Calderbank, E.M. Rains, P.M. Shor, and N.J.A. Sloane.
\newblock Quantum error correction via codes over gf(4).
\newblock \emph{IEEE Transactions on Information Theory}, 44\penalty0 (4):\penalty0 1369--1387, 1998.
\newblock \doi{10.1109/18.681315}.

\bibitem[Gottesman(2009)]{QECCgottesman2009introductionquantumerrorcorrection}
Daniel Gottesman.
\newblock An introduction to quantum error correction and fault-tolerant quantum computation, 2009.
\newblock URL \url{https://arxiv.org/abs/0904.2557}.

\bibitem[Cirac et~al.(1997)Cirac, Zoller, Kimble, and Mabuchi]{intro-QN-1}
J.~I. Cirac, P.~Zoller, H.~J. Kimble, and H.~Mabuchi.
\newblock Quantum state transfer and entanglement distribution among distant nodes in a quantum network.
\newblock \emph{Phys. Rev. Lett.}, 78:\penalty0 3221--3224, Apr 1997.
\newblock \doi{10.1103/PhysRevLett.78.3221}.
\newblock URL \url{https://link.aps.org/doi/10.1103/PhysRevLett.78.3221}.

\bibitem[Hillery et~al.(1999)Hillery, Bu\ifmmode~\check{z}\else \v{z}\fi{}ek, and Berthiaume]{intro-secret-sharing}
Mark Hillery, Vladim\'{\i}r Bu\ifmmode~\check{z}\else \v{z}\fi{}ek, and Andr\'e Berthiaume.
\newblock Quantum secret sharing.
\newblock \emph{Phys. Rev. A}, 59:\penalty0 1829--1834, Mar 1999.
\newblock \doi{10.1103/PhysRevA.59.1829}.
\newblock URL \url{https://link.aps.org/doi/10.1103/PhysRevA.59.1829}.

\bibitem[Hein et~al.(2006)Hein, Dür, Eisert, Raussendorf, den Nest, and Briegel]{gstates_review}
M.~Hein, W.~Dür, J.~Eisert, R.~Raussendorf, M.~Van den Nest, and H.~J. Briegel.
\newblock Entanglement in graph states and its applications, 2006.

\bibitem[Hein et~al.(2004)Hein, Eisert, and Briegel]{Eisert}
M.~Hein, J.~Eisert, and H.~J. Briegel.
\newblock Multiparty entanglement in graph states.
\newblock \emph{Physical Review A}, 69\penalty0 (6), June 2004.
\newblock ISSN 1094-1622.
\newblock \doi{10.1103/physreva.69.062311}.
\newblock URL \url{http://dx.doi.org/10.1103/PhysRevA.69.062311}.

\bibitem[Raussendorf et~al.(2002)Raussendorf, Browne, and Briegel]{mb1Raussendorf_2002}
Robert Raussendorf, Daniel Browne, and Hans Briegel.
\newblock The one-way quantum computer--a non-network model of quantum computation.
\newblock \emph{Journal of Modern Optics}, 49\penalty0 (8):\penalty0 1299–1306, July 2002.
\newblock ISSN 1362-3044.
\newblock \doi{10.1080/09500340110107487}.
\newblock URL \url{http://dx.doi.org/10.1080/09500340110107487}.

\bibitem[Raissi et~al.(2020)Raissi, Teixidó, Gogolin, and Acín]{Raissi_2020}
Zahra Raissi, Adam Teixidó, Christian Gogolin, and Antonio Acín.
\newblock Constructions of $k$-uniform and absolutely maximally entangled states beyond maximum distance codes.
\newblock \emph{Physical Review Research}, 2\penalty0 (3), September 2020.
\newblock ISSN 2643-1564.
\newblock \doi{10.1103/physrevresearch.2.033411}.
\newblock URL \url{http://dx.doi.org/10.1103/PhysRevResearch.2.033411}.

\bibitem[Raissi et~al.(2018)Raissi, Gogolin, Riera, and Acín]{Raissi_2018}
Zahra Raissi, Christian Gogolin, Arnau Riera, and Antonio Acín.
\newblock Optimal quantum error correcting codes from absolutely maximally entangled states.
\newblock \emph{Journal of Physics A: Mathematical and Theoretical}, 51\penalty0 (7):\penalty0 075301, January 2018.
\newblock ISSN 1751-8121.
\newblock \doi{10.1088/1751-8121/aaa151}.
\newblock URL \url{http://dx.doi.org/10.1088/1751-8121/aaa151}.

\bibitem[Schlingemann(2001)]{eisert_schlingemann2001}
D.~Schlingemann.
\newblock Stabilizer codes can be realized as graph codes, 2001.
\newblock URL \url{https://arxiv.org/abs/quant-ph/0111080}.

\bibitem[Cohen et~al.(2005)Cohen, Honkala, Litsyn, and Lobstein]{colorabilitydef}
G{\'e}rard~D. Cohen, Iiro~S. Honkala, Simon Litsyn, and Antoine Lobstein.
\newblock Covering codes.
\newblock In \emph{North-Holland Mathematical Library}, 2005.
\newblock URL \url{https://api.semanticscholar.org/CorpusID:195891379}.

\bibitem[Severini(2006)]{Severini_2006}
Simone Severini.
\newblock Two-colorable graph states with maximal schmidt measure.
\newblock \emph{Physics Letters A}, 356\penalty0 (2):\penalty0 99–103, July 2006.
\newblock ISSN 0375-9601.
\newblock \doi{10.1016/j.physleta.2006.03.026}.
\newblock URL \url{http://dx.doi.org/10.1016/j.physleta.2006.03.026}.

\bibitem[Hedayat et~al.(1999)Hedayat, Sloane, and Stufken]{OAdefhedayat1999orthogonal}
A.~S. Hedayat, N.~J.~A. Sloane, and J.~Stufken.
\newblock \emph{Orthogonal Arrays: Theory and Applications}.
\newblock Springer-Verlag, New York, 1999.

\bibitem[Goyeneche and Życzkowski(2014)]{QOA-def1}
Dardo Goyeneche and Karol Życzkowski.
\newblock Genuinely multipartite entangled states and orthogonal arrays.
\newblock \emph{Physical Review A}, 90\penalty0 (2), August 2014.
\newblock ISSN 1094-1622.
\newblock \doi{10.1103/physreva.90.022316}.
\newblock URL \url{http://dx.doi.org/10.1103/PhysRevA.90.022316}.

\bibitem[Seveso et~al.(2017{\natexlab{a}})Seveso, Goyeneche, and Życzkowski]{QOA-def2}
Luigi Seveso, Dardo Goyeneche, and Karol Życzkowski.
\newblock All orthogonal arrays from quantum states.
\newblock 2017{\natexlab{a}}.
\newblock URL \url{https://api.semanticscholar.org/CorpusID:125400057}.

\bibitem[Goyeneche et~al.(2018)Goyeneche, Raissi, Di~Martino, and \ifmmode~\dot{Z}\else \.{Z}\fi{}yczkowski]{zahra_qoa}
Dardo Goyeneche, Zahra Raissi, Sara Di~Martino, and Karol \ifmmode~\dot{Z}\else \.{Z}\fi{}yczkowski.
\newblock Entanglement and quantum combinatorial designs.
\newblock \emph{Phys. Rev. A}, 97:\penalty0 062326, Jun 2018.
\newblock \doi{10.1103/PhysRevA.97.062326}.
\newblock URL \url{https://link.aps.org/doi/10.1103/PhysRevA.97.062326}.

\bibitem[Zha et~al.(2023)Zha, Ahmed, Imran, Rizvi, Magsi, and Zhang]{QOA-def3}
Xin-Wei Zha, Irfan Ahmed, Muhammad Imran, Syed Maaz~Ahmed Rizvi, Hina Magsi, and Yanpeng Zhang.
\newblock Phase parameters of orthogonal arrays and special maximally multi-qubit entangled states.
\newblock \emph{Modern Physics Letters B}, 2023.
\newblock URL \url{https://api.semanticscholar.org/CorpusID:265559569}.

\bibitem[Zang et~al.(2023)Zang, Tian, Fei, and Zuo]{QOA-kUNI-1}
Yajuan Zang, Zihong Tian, S.~Fei, and Huijuan Zuo.
\newblock Quantum k-uniform states from quantum orthogonal arrays.
\newblock \emph{International Journal of Theoretical Physics}, 62:\penalty0 1--22, 2023.

\bibitem[Lin et~al.(2024)Lin, Pang, and Wang]{QOA-applications-1}
Xiao Lin, Shanqi Pang, and Jing Wang.
\newblock Construction of asymmetric orthogonal arrays with high strength.
\newblock \emph{Stat}, 2024.

\bibitem[Rötteler and Wocjan(2004)]{QOA-applications-2}
M.~Rötteler and P.~Wocjan.
\newblock Equivalence of decoupling schemes and orthogonal arrays.
\newblock \emph{IEEE Transactions on Information Theory}, 52:\penalty0 4171--4181, 2004.

\bibitem[Burchardt and Raissi(2020)]{Burchardt_2020}
Adam Burchardt and Zahra Raissi.
\newblock Stochastic local operations with classical communication of absolutely maximally entangled states.
\newblock \emph{Physical Review A}, 102\penalty0 (2), August 2020.
\newblock ISSN 2469-9934.
\newblock \doi{10.1103/physreva.102.022413}.
\newblock URL \url{http://dx.doi.org/10.1103/PhysRevA.102.022413}.

\bibitem[Bouchet(1993)]{BOUCHET199375}
André Bouchet.
\newblock Recognizing locally equivalent graphs.
\newblock \emph{Discrete Mathematics}, 114\penalty0 (1):\penalty0 75--86, 1993.
\newblock ISSN 0012-365X.
\newblock \doi{https://doi.org/10.1016/0012-365X(93)90357-Y}.
\newblock URL \url{https://www.sciencedirect.com/science/article/pii/0012365X9390357Y}.

\bibitem[Van~den Nest et~al.(2004{\natexlab{a}})Van~den Nest, Dehaene, and De~Moor]{Van_den_Nest_2004}
Maarten Van~den Nest, Jeroen Dehaene, and Bart De~Moor.
\newblock Graphical description of the action of local clifford transformations on graph states.
\newblock \emph{Physical Review A}, 69\penalty0 (2), February 2004{\natexlab{a}}.
\newblock ISSN 1094-1622.
\newblock \doi{10.1103/physreva.69.022316}.
\newblock URL \url{http://dx.doi.org/10.1103/PhysRevA.69.022316}.

\bibitem[Van~den Nest et~al.(2004{\natexlab{b}})Van~den Nest, Dehaene, and De~Moor]{Van_den_Nest_2004_alg}
Maarten Van~den Nest, Jeroen Dehaene, and Bart De~Moor.
\newblock Efficient algorithm to recognize the local clifford equivalence of graph states.
\newblock \emph{Physical Review A}, 70\penalty0 (3), September 2004{\natexlab{b}}.
\newblock ISSN 1094-1622.
\newblock \doi{10.1103/physreva.70.034302}.
\newblock URL \url{http://dx.doi.org/10.1103/PhysRevA.70.034302}.

\bibitem[Raissi et~al.(2022)Raissi, Burchardt, and Barnes]{zahra_adam_kuni}
Zahra Raissi, Adam Burchardt, and Edwin Barnes.
\newblock General stabilizer approach for constructing highly entangled graph states.
\newblock \emph{Physical Review A}, 106\penalty0 (6), December 2022.
\newblock ISSN 2469-9934.
\newblock \doi{10.1103/physreva.106.062424}.
\newblock URL \url{http://dx.doi.org/10.1103/PhysRevA.106.062424}.

\bibitem[West(2001)]{eisert_g1}
Douglas~B. West.
\newblock \emph{Introduction to Graph Theory}.
\newblock Prentice Hall, Upper Saddle River, N.J., second edition, 2001.
\newblock ISBN 0130144002 9780130144003.

\bibitem[Diestel(2005)]{eisert_g2}
Reinhard Diestel.
\newblock \emph{Graph Theory (Graduate Texts in Mathematics)}.
\newblock Springer, 2005.
\newblock ISBN 3540261826.

\bibitem[Bahramgiri and Beigi(2007)]{quditsLCbahramgiri2007graphstatesactionlocal}
Mohsen Bahramgiri and Salman Beigi.
\newblock Graph states under the action of local clifford group in non-binary case, 2007.
\newblock URL \url{https://arxiv.org/abs/quant-ph/0610267}.

\bibitem[Raissi et~al.(2024)Raissi, Barnes, and Economou]{zahra_qudits}
Zahra Raissi, Edwin Barnes, and Sophia~E. Economou.
\newblock Deterministic generation of qudit photonic graph states from quantum emitters.
\newblock \emph{PRX Quantum}, 5:\penalty0 020346, May 2024.
\newblock \doi{10.1103/PRXQuantum.5.020346}.
\newblock URL \url{https://link.aps.org/doi/10.1103/PRXQuantum.5.020346}.

\bibitem[D\"ur et~al.(2003)D\"ur, Aschauer, and Briegel]{eisert_5}
W.~D\"ur, H.~Aschauer, and H.-J. Briegel.
\newblock Multiparticle entanglement purification for graph states.
\newblock \emph{Phys. Rev. Lett.}, 91:\penalty0 107903, Sep 2003.
\newblock \doi{10.1103/PhysRevLett.91.107903}.
\newblock URL \url{https://link.aps.org/doi/10.1103/PhysRevLett.91.107903}.

\bibitem[Zander and Becker(2024)]{zander2024benchmark}
René Zander and Colin Kai-Uwe Becker.
\newblock Benchmarking multipartite entanglement generation with graph states, 2024.
\newblock URL \url{https://arxiv.org/abs/2402.00766}.

\bibitem[Eisert and Briegel(2001)]{SM-Eisert}
Jens Eisert and Hans~J. Briegel.
\newblock Schmidt measure as a tool for quantifying multiparticle entanglement.
\newblock \emph{Phys. Rev. A}, 64:\penalty0 022306, Jul 2001.
\newblock \doi{10.1103/PhysRevA.64.022306}.
\newblock URL \url{https://link.aps.org/doi/10.1103/PhysRevA.64.022306}.

\bibitem[Karp(1972)]{Karp1972}
Richard~M. Karp.
\newblock \emph{Reducibility among Combinatorial Problems}, pages 85--103.
\newblock Springer US, Boston, MA, 1972.
\newblock ISBN 978-1-4684-2001-2.
\newblock \doi{10.1007/978-1-4684-2001-2_9}.
\newblock URL \url{https://doi.org/10.1007/978-1-4684-2001-2_9}.

\bibitem[Stockmeyer(1973)]{stockmeyer1973planar}
L.~Stockmeyer.
\newblock Planar 3-colorability is polynomial complete.
\newblock \emph{SIGACT News}, 5\penalty0 (1), 1973.

\bibitem[Lovasz(1973)]{lovasz1973coverings}
L.~Lovasz.
\newblock Coverings and colorings of hypergraphs.
\newblock In \emph{Proceedings of the 4th Southeastern Conference on Combinatorics, Graph Theory, and Computing}, pages 3--12, 1973.
\newblock URL \url{http://www.cs.elte.hu/~lovasz/scans/covercolor.pdf}.

\bibitem[Seveso et~al.(2017{\natexlab{b}})Seveso, Goyeneche, and Życzkowski]{QOA-construction-1}
L.~Seveso, D.~Goyeneche, and K.~Życzkowski.
\newblock Coarse-grained entanglement classification through orthogonal arrays.
\newblock \emph{Journal of Mathematical Physics}, 2017{\natexlab{b}}.

\bibitem[Vandré et~al.(2024)Vandré, de~Jong, Hahn, Burchardt, Gühne, and Pappa]{guhne_marginals}
Lina Vandré, Jarn de~Jong, Frederik Hahn, Adam Burchardt, Otfried Gühne, and Anna Pappa.
\newblock Distinguishing graph states by the properties of their marginals, 2024.
\newblock URL \url{https://arxiv.org/abs/2406.09956}.

\bibitem[Dür et~al.(2000)Dür, Vidal, and Cirac]{GHZ_vs_W}
W.~Dür, G.~Vidal, and J.~I. Cirac.
\newblock Three qubits can be entangled in two inequivalent ways.
\newblock \emph{Physical Review A}, 62\penalty0 (6), November 2000.
\newblock ISSN 1094-1622.
\newblock \doi{10.1103/physreva.62.062314}.
\newblock URL \url{http://dx.doi.org/10.1103/PhysRevA.62.062314}.

\bibitem[Booth and Carette(2022)]{qudit_LC_1}
Robert~I. Booth and Titouan Carette.
\newblock Complete zx-calculi for the stabiliser fragment in odd prime dimensions.
\newblock Schloss Dagstuhl – Leibniz-Zentrum für Informatik, 2022.
\newblock \doi{10.4230/LIPICS.MFCS.2022.24}.
\newblock URL \url{https://drops.dagstuhl.de/entities/document/10.4230/LIPIcs.MFCS.2022.24}.

\bibitem[Booth(2022)]{qudit_LC_2}
Robert~Ivan Booth.
\newblock \emph{{Measurement-based quantum computation beyond qubits}}.
\newblock Theses, {Sorbonne Universit{\'e}}, February 2022.
\newblock URL \url{https://theses.hal.science/tel-03867179}.

\bibitem[Ji et~al.(2008)Ji, Chen, Wei, and Ying]{ji2008lulcconjecturefalse}
Zhengfeng Ji, Jianxin Chen, Zhaohui Wei, and Mingsheng Ying.
\newblock The lu-lc conjecture is false, 2008.
\newblock URL \url{https://arxiv.org/abs/0709.1266}.

\bibitem[Tsimakuridze and Gühne(2017)]{guhne_lclu}
Nikoloz Tsimakuridze and Otfried Gühne.
\newblock Graph states and local unitary transformations beyond local clifford operations.
\newblock \emph{Journal of Physics A: Mathematical and Theoretical}, 50\penalty0 (19):\penalty0 195302, April 2017.
\newblock ISSN 1751-8121.
\newblock \doi{10.1088/1751-8121/aa67cd}.
\newblock URL \url{http://dx.doi.org/10.1088/1751-8121/aa67cd}.

\bibitem[Adcock et~al.(2020)Adcock, Morley-Short, Dahlberg, and Silverstone]{lcrobits}
Jeremy~C. Adcock, Sam Morley-Short, Axel Dahlberg, and Joshua~W. Silverstone.
\newblock Mapping graph state orbits under local complementation.
\newblock \emph{Quantum}, 4:\penalty0 305, August 2020.
\newblock ISSN 2521-327X.
\newblock \doi{10.22331/q-2020-08-07-305}.
\newblock URL \url{http://dx.doi.org/10.22331/q-2020-08-07-305}.

\bibitem[Latorre and Sierra(2015)]{kuni-1}
Jose~I. Latorre and German Sierra.
\newblock Holographic codes, 2015.
\newblock URL \url{https://arxiv.org/abs/1502.06618}.

\bibitem[Raissi(2020)]{kuni-2}
Zahra Raissi.
\newblock {Modifying Method of Constructing Quantum Codes From Highly Entangled States}.
\newblock \emph{IEEE Access}, 8:\penalty0 222439--222448, 2020.
\newblock \doi{10.1109/ACCESS.2020.3043401}.

\bibitem[Pastawski et~al.(2015)Pastawski, Yoshida, Harlow, and Preskill]{kuni-happy-paper}
Fernando Pastawski, Beni Yoshida, Daniel Harlow, and John Preskill.
\newblock Holographic quantum error-correcting codes: toy models for the bulk/boundary correspondence.
\newblock \emph{Journal of High Energy Physics}, 2015\penalty0 (6), June 2015.
\newblock ISSN 1029-8479.
\newblock \doi{10.1007/jhep06(2015)149}.
\newblock URL \url{http://dx.doi.org/10.1007/JHEP06(2015)149}.

\end{thebibliography}

\onecolumngrid

\appendix
\section{Some useful Properties}\label{appen:useful}

Let us present some useful for our analysis relations. The first one is the commutation of $H^\dagger CZ\ket{kj}$, which is given by the following equation:
\begin{equation}
\label{eq:hdagczcomm}
    H^\dagger CZ\ket{kj}=CXH^\dagger\ket{kj},
\end{equation}
and it is proven here. Let us assume a two-body state $\ket{k,j}$ with $k,j\in\mathbb{Z}_d$, where d is the local dimension. 
\begin{align}
    H^\dagger CZ\ket{kj}=\omega^{kj}H^\dagger \ket{kj}=\displaystyle\frac{1}{\sqrt{d}}\omega^{kj}\ket{k}\sum_{l=0}^{d-1}\omega^{-lj}\ket{l}.
\end{align}
On the other hand, let us apply the following assuming the control is happening always on the first particle:
\begin{align}
    CXH^\dagger\ket{kj}&=\displaystyle\frac{1}{\sqrt{d}}CX\ket{k}\sum_{l=0}^{d-1}\omega^{-lj}\ket{l} =\frac{1}{\sqrt{d}}\ket{k}\sum_{l=0}^{d-1}\omega^{-lj}\ket{l+k} =\frac{1}{\sqrt{d}}\ket{k}\sum_{l=0}^{d-1}\omega^{kj}\omega^{-kj}\omega^{-lj}\ket{l+k} \nonumber\\
    &=\frac{1}{\sqrt{d}}\omega^{kj}\ket{k}\sum_{l=0}^{d-1}\omega^{-(l+k)j}\ket{l+k} =\frac{1}{\sqrt{d}}\omega^{kj}\ket{k}\sum_{l^\prime=0}^{d-1}\omega^{-l^\prime j}\ket{l^\prime}. 
\end{align}
Therefore the proof is completed.
Another useful property is that the $X$ operator to the power $a\in\mathbb{F}_d$ is given by the following equation:
\begin{equation}
\centering\label{eq:xacommut}
    X^a = H^{\dagger}Z^aH.
\end{equation}
The above property is proven here. The application of $X^a$ yields $X^a\ket{i} =\ket{i+a}$. While for the $H^{\dagger}Z^aH$ we have:
\begin{align*}
    &H\ket{i} = \sum_{l=0}^{d-1}\omega^{il}\ket{l}\\
    &Z^aH\ket{i} = \sum_{l=0}^{d-1}\omega^{il}\omega^{al}\ket{l}\\
    &H^{\dagger}Z^aH\ket{i} = \sum_{m=0}^{d-1}\sum_{l=0}^{d-1}\omega^{il}\omega^{al}\omega^{-ml}\ket{m},
\end{align*}
which leads to $H^{\dagger}Z^aH\ket{i} =\ket{i+a}$ and therefore the proof is completed. Finally, a mathematical fact we are going to use extensively in our analysis is:
\begin{equation}
\label{eq:delta-prop} 
    \frac{1}{N}\sum_l \left(\text{exp}\left(\frac{2\pi i }{N}(k-j)\right)\right)^l = \delta_{k,j}, \quad \text{for} \quad k,j\in\mathbb{Z},
\end{equation}
where $\delta_{i,j}$ is the Kronecker delta.
\section{Proof of proposition \ref{thm-3col}}\label{app:proof-3col-thm}
\begin{proposition}
Let us assume a three-colorable graph state described by the adjacency matrix (\ref{eq:gamma-3col-studied}), defined as in equation (\ref{eq:3coldef}), satisfying the set conditions described in the step-by-step construction at the beginning of this section and without loss of generality $n_R\leq n_{B_u}$, and $n_G\leq n_{B_c\setminus G}$. Then, any such graph state satisfies the following equation:

\begin{align}
\label{eq:3colfinalresultv2appen}
     H^{\dagger\otimes B_c\setminus G} H^{\dagger\otimes B_u} \ket{\psi_{\text{3-color}}} = \sum_{\vec{u}=0}^{d-1} \ket{\vec{u}\mathcal{G}_{B_uR}}\Delta\sum_{\vec{g}=0}^{d-1}\ket{\vec{g}\mathcal{G}_{B_c\setminus G}},
\end{align}
where $\vec{u}=(u_1,u_2...,u_{n_R})$, $\vec{g}=(g_1,\cdots,g_{n_G})$ both row vectors, $\mathcal{G}_{B_uR}=\begin{bmatrix} \mathbb{I}_{n_R} & | & A_{B_uR}^T\end{bmatrix}$, $\mathcal{G}_{B_c\setminus G}=\begin{bmatrix} \mathbb{I}_{n_R} & | & A_{GB_c\setminus G}^T\end{bmatrix}$, $f_{k\in B_c} = \sum_{r\in R}\Gamma_{rk}u_r$, and the operator $\Delta$ is defined as following:
\begin{equation}
\label{eq:thm-delta-def-appen}
    \Delta = \left(\bigotimes_{i=1}^{n_G}Z_{g_i}^{f_{g_i}}\right)\left(\bigotimes_{j=1}^{n_{B_c\setminus G}}X_{b_j}^{f_{b_j}}\right).
\end{equation}
\end{proposition}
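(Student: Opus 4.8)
The plan is to treat the state as two two-colorable subgraphs that share the red register $R$ as a common control, and to show that the $CZ$ coupling between $R$ and $B_c$ collapses into the local operator $\Delta$ once $R$ is expanded in the computational basis. I would begin by writing $\ket{+}^{\otimes R}=\sum_{\vec u=0}^{d-1}\ket{\vec u}$ and fixing a single term $\ket{\vec u}$, so that every red particle sits in a definite computational state. Because $CZ$ is diagonal and symmetric, each factor $C_bZ_r^{\Gamma_{br}}$ with $r$ in state $\ket{u_r}$ acts on the neighbouring $B_u$ or $B_c$ particle $k$ as a single-qudit phase, that is, as $Z_k^{f_k}$ with $f_k=\sum_{r\in R}\Gamma_{rk}u_r$. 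This is the central observation: conditioning on $\vec u$ turns the two nonlocal $CZ$ layers attached to $R$ (the second and third products in (\ref{eq:3coldef})) into diagonal single-particle operators, which is precisely what generates the $\vec u$-dependence of $\Delta$.

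For the $B_u$ block the argument then reduces to Proposition \ref{thm-2col}: the induced subgraph on $R\cup B_u$ has no internal $B_u$ edges, so applying $H^{\dagger\otimes B_u}$ and using $H^\dagger Z^a=X^a H^\dagger$ together with $\ket{+}=H\ket{0}$ sends each $B_u$ particle $b$ to the computational state $\ket{\sum_{r}\Gamma_{rb}u_r}$. Collecting the red register and these images gives exactly $\ket{\vec u\,\mathcal{G}_{B_uR}}$; I would present this either by invoking Proposition \ref{thm-2col} on the induced subgraph or through the one-line identity $H^\dagger_b Z_b^{f_b}\ket{+}_b=\ket{f_b}_b$.

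The $B_c$ block is where the real work lies. After conditioning on $\vec u$, the reduced state on $B_c$ is $\big(\bigotimes_{k\in B_c}Z_k^{f_k}\big)\prod_{b,b'\in B_c}C_bZ_{b'}^{\Gamma_{bb'}}\ket{+}^{\otimes B_c}$, whose internal product is precisely the two-colorable graph on $B_c$ with color classes $G$ and $B_c\setminus G$. I would first commute $H^{\dagger\otimes B_c\setminus G}$ past the diagonal prefactor: it passes trivially through the green factors $Z_{g_i}^{f_{g_i}}$ (disjoint support) and converts each blue factor by $H^\dagger Z^{f_{b_j}}=X^{f_{b_j}}H^\dagger$, so the prefactor becomes exactly $\Delta=\big(\bigotimes_i Z_{g_i}^{f_{g_i}}\big)\big(\bigotimes_j X_{b_j}^{f_{b_j}}\big)$ standing to the left of $H^{\dagger\otimes B_c\setminus G}$. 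The remaining object $H^{\dagger\otimes B_c\setminus G}\prod_{b,b'\in B_c}C_bZ_{b'}^{\Gamma_{bb'}}\ket{+}^{\otimes B_c}$ is then again a case of Proposition \ref{thm-2col}, applied to the $B_c$ subgraph with $G$ as the control class and $B_c\setminus G$ as the targets, which yields $\sum_{\vec g}\ket{\vec g\,\mathcal{G}_{B_c\setminus G}}$. Reassembling the two blocks at fixed $\vec u$ and restoring the sum over $\vec u$ then gives (\ref{eq:3colfinalresultv2special}).

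The main obstacle, and the step I would verify most carefully, is the bookkeeping of operator ordering and supports around $\Delta$: one must check that the internal $B_c$ layer and the $R$-to-$B_c$ coupling layer commute so that the $Z_k^{f_k}$ may be pulled to the left, that $H^{\dagger\otimes B_c\setminus G}$ touches only the blue $B_c\setminus G$ factors, and that the resulting $X_{b_j}^{f_{b_j}}$ act on the computational kets $\ket{\vec g\,\mathcal{G}_{B_c\setminus G}}$ produced by Proposition \ref{thm-2col}, so that $\Delta$ genuinely sits between the two closed-form sums rather than being absorbed into either. Since every $CZ$ and $Z$ factor is diagonal these commutations are routine, and the identity (\ref{eq:xacommut}) supplies each conversion, so no computation beyond Proposition \ref{thm-2col} is needed.
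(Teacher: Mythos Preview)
Your argument is correct and follows the same structural decomposition as the paper---split into the $R\cup B_u$ block and the $B_c$ block, condition on $\vec u$, and reduce each block to a two-colorable problem---but your execution is considerably more economical. The paper first applies $H^{\dagger\otimes B_c}$ to \emph{all} of $B_c$, introduces the auxiliary operator $\mathcal{O}_{b,b'\in B_c}=H^{\dagger\otimes B_c}\bigl(\prod C Z\bigr)H^{\otimes B_c}$, and then computes its action on $\bigotimes_{k\in B_c}\ket{f_k}$ by explicitly expanding every Hadamard into a sum and collapsing the resulting phases with the Kronecker-delta identity (\ref{eq:delta-prop}); only afterwards does it undo the unwanted $H^{\dagger\otimes G}$ via $H^{\otimes G}H^{\dagger\otimes B_c}=H^{\dagger\otimes B_c\setminus G}$. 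You bypass this detour by applying $H^{\dagger\otimes B_c\setminus G}$ from the start, commuting it through the diagonal prefactor $\bigotimes_{k\in B_c}Z_k^{f_k}$ to obtain $\Delta$ immediately, and then recognising the remaining factor as a verbatim instance of Proposition~\ref{thm-2col} on the $B_c$ subgraph. The paper's route has the minor advantage of making the operator $\mathcal{O}_{b,b'\in B_c}$ available for the later cluster-state analysis in Appendix~\ref{appen:1d-cluster}, but for the proposition itself your argument is cleaner and avoids re-deriving a result already proved.
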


\begin{proof}
Following the steps of the proof for the closed-form of the $\ket{\psi_{\text{2-color}}}$, $H^{\dagger\otimes B} = H^{\dagger\otimes B_c}H^{\dagger\otimes B_u}$ is applied:
\begin{equation}
    \label{eq:3colgenproof1}
    H^{\dagger\otimes B_c} H^{\dagger\otimes B_u} \ket{\psi_{\text{3-color}}} = H^{\dagger\otimes B_c} \left( \prod_{b, b' \in B_c} C_b Z_{b'}^{\Gamma_{bb'}} \right) \left( \prod_{b\in B_c, r\in R} C_r Z_b^{\Gamma_{br}} \right)H^{\dagger\otimes B_u} \left( \prod_{b\in B_u, r\in R} C_r Z_b^{\Gamma_{br}} \right) \ket{+}^{\otimes R} \ket{+}^{\otimes B_u} \ket{+}^{\otimes B_c}.
\end{equation}
It is clear that if $n_R\leq n_{B_u}$ and $\vec{u}=(u_1,u_2...,u_{n_R})$, a row vector :
\begin{equation}
    H^{\dagger\otimes B_u} \left( \prod_{b\in B_u, r\in R} C_b Z_{r}^{\Gamma_{br}} \right) \ket{+}^{\otimes R} \ket{+}^{\otimes B_u} = \displaystyle\sum_{\vec{u}=0}\ket{\vec{u}}\bigotimes_{b\in B_u}\ket{\sum_{r\in R}\Gamma_{rb}u_r}.
\end{equation}
Therefore, it is possible to obtain the following:
\begin{equation}
\label{eq:3colproofafterbu}
    H^{\dagger\otimes B_c} H^{\dagger\otimes B_u} \ket{\psi_{\text{3-color}}} = \displaystyle\sum_{\vec{u}=0} H^{\dagger\otimes B_c} \left( \prod_{b, b' \in B_c} C_b Z_{b'}^{\Gamma_{bb'}} \right) \left( \prod_{b\in B_c, r\in R} C_b Z_{r}^{\Gamma_{br}} \right)\ket{\vec{u}}\bigotimes_{b\in B_u}\ket{\sum_{r\in R}\Gamma_{rb}u_r} H^{\otimes B_c}\ket{0}^{\otimes B_c}.
\end{equation}
With consideration of the block $A_{B_uR}$ of the matrix (\ref{eq:gamma-3col-studied}) the above equation can be written as:
\begin{equation}
\label{eq:3colproofafterbushort}
    H^{\dagger\otimes B_c} H^{\dagger\otimes B_u} \ket{\psi_{\text{3-color}}} = \displaystyle\sum_{\vec{u}=0} H^{\dagger\otimes B_c} \left( \prod_{b, b' \in B_c} C_b Z_{b'}^{\Gamma_{bb'}} \right) \left( \prod_{b\in B_c, r\in R} C_r Z_{b}^{\Gamma_{br}} \right)\ket{\vec{u}\mathcal{G}_{B_uR}} H^{\otimes B_c}\ket{0}^{\otimes B_c}.
\end{equation}
where $\mathcal{G}_{B_uR}=\begin{bmatrix} \mathbb{I}_{n_R} & | & A_{B_uR}^T\end{bmatrix}$. However, this form is not helpful to proceed with our calculation, therefore we are going to expand $\ket{\vec{u}\mathcal{G}_{B_uR}^T}$ and use the closed-form expression at the ending point of our proof. The primary goal is to commute $ H^{\dagger\otimes B_c}$ through the two products with the controlled-Z operations. For this reason, the following is performed:
\begin{equation}
    \label{eq:example}
    H^{\dagger\otimes B_c}\left( \prod_{b, b' \in B_c} C_b Z_{b'}^{\Gamma_{bb'}} \right) =H^{\dagger\otimes B_c}\left( \prod_{b, b' \in B_c} C_b Z_{b'}^{\Gamma_{bb'}} \right)H^{\otimes B_c}H^{\dagger\otimes B_c}=\mathcal{O}_{b,b'\in B_c}H^{\dagger\otimes B_c}.
\end{equation}
For clarification, we have defined an operator named $\mathcal{O}_{b,b'\in B_c}$ as follows:
\begin{equation}
\label{eq:o-operator-definition}
    \mathcal{O}_{b,b'\in B_c} = H^{\dagger\otimes B_c}\left( \prod_{b, b' \in B_c} C_b Z_{b'}^{\Gamma_{bb'}} \right)H^{\otimes B_c}.
\end{equation}
The operator defined in equation (\ref{eq:o-operator-definition}) will help us commutes as following:
\begin{equation}
    H^{\dagger\otimes B_c}\left( \prod_{b\in B_c, r\in R} C_b Z_{r}^{\Gamma_{br}} \right) = \left( \prod_{b\in B_c, r\in R} C_b X_{r}^{\Gamma_{br}} \right) H^{\dagger\otimes B_c},
\end{equation}
and thus equation (\ref{eq:3colproofafterbushort}) can be written as:
\begin{equation}
    H^{\dagger\otimes B_c} H^{\dagger\otimes B_u} \ket{\psi_{\text{3-color}}} = \sum_{\vec{u}=0}^{d-1} \mathcal{O}_{b,b'\in B_c} \left( \prod_{r\in R, b\in B_c} C_r X_b^{\Gamma_{rb}} \right) \ket{\vec{u}}\left( \bigotimes_{b\in B_u} \ket{\sum_{r\in R} \Gamma_{rb} u_r} \right) \ket{0}^{\otimes B_c}. 
\end{equation}
Finally, using similar steps as the concluding steps in the proof of proposition \ref{thm-2col} and recalling that $n_R\leq n_{B_c}$:
\begin{equation}
    H^{\dagger\otimes B_c} H^{\dagger\otimes B_u} \ket{\psi_{\text{3-color}}} = \sum_{\vec{u}=0}^{d-1} \left(\ket{\vec{u}}  \bigotimes_{b\in B_u} \ket{\sum_{r\in R} \Gamma_{rb} u_r} \right) \left( \mathcal{O}_{b,b'\in B_c} \bigotimes_{b\in B_c} \ket{\sum_{r\in R} \Gamma_{rb} u_r} \right).
\end{equation}
Therefore we can write the following:
\begin{equation}
\label{eq:appen:proof-1}
    H^{\dagger\otimes B_c} H^{\dagger\otimes B_u} \ket{\psi_{\text{3-color}}} = \sum_{\vec{u}=0}^{d-1} \ket{\vec{u}\mathcal{G}_{B_uR}} \left( \mathcal{O}_{b,b'\in B_c} \bigotimes_{b\in B_c} \ket{\sum_{r\in R} \Gamma_{rb} u_r} \right).
\end{equation}
This is a great point to underline, that the connections of the blue particles imposed in the setup are apparent in the above equation, as well and the fact that the particles belonging to the $B_c$ set are still connected with the red particles is obvious from the coupling in the $\ket{\sum_{r\in R} \Gamma_{rb} u_r}$. Hence, we are going to define the parameter $f_k$ as follows:
\begin{equation}
\label{eq:appen_fk-def}
    f_k=\sum_{r\in R}\Gamma_{rk}u_r.
\end{equation}
This parameter describes the connections of a specific particle named $k$ with every red particle. By construction, the particles belonging in the $B_c$ are assumed to be a two-colorable subgraph, and thus the total graph is a three-colorable one. This means that the $B_c$ graph has particles that maintain their blue color. These particles belong to a set which we are going to denote as $B_c\setminus G$ and this is a set with elements $B_c\setminus G= \{b'_1,\cdots,b'_{n_{B_c\setminus G}}\}$. This means that the total number of particles belonging to the $B_c$ graph that maintain their blue color is $n_{B_c\setminus G}$. Following the same structure, the $B_c$ graph has some particles that became green. Their total number is $n_G$, they belong to the set $G$ and this set has elements denoted as $G= \{g_1,\cdots,g_{n_{G}}\}$. With these points in mind, we have that: 
\begin{equation}
    \bigotimes_{b\in B_c}\ket{\sum_{r\in R}\Gamma_{rb}u_r} = \bigotimes_{k\in B_c}\ket{f_k} = \bigotimes_{g\in G}\ket{f_g}\bigotimes_{b\in B_c\setminus G}\ket{f_{b}}.
\end{equation}
Bringing us back to the equation (\ref{eq:appen:proof-1}) we have to calculate the following:
\begin{equation}
    \mathcal{O}_{b,b'\in B_c} \bigotimes_{b\in B_c} \ket{\sum_{r\in R} \Gamma_{rb} u_r},
\end{equation}
and we will do it in stages. Until this point the cardinality of $R$, $G$, $B_u$, $B_c$, and $B_c\setminus G$ was denoted as $n_R$, $n_G$, $n_{B_u}$, $n_{B_c}$, and $n_{B_c\setminus G}$ respectively. From this point in the equations of our proof we are going to denote it as $\lvert R \rvert$, $\lvert G \rvert$, $\lvert B_u \rvert$, $\lvert B_c \rvert$, and $\left\lvert B_c\setminus G \right\rvert$. Let us start the calculation by applying the Hadamard gates: 
\begin{equation}
    H^{\otimes G}H^{\otimes B_c\setminus G} \bigotimes_{g\in G}\ket{f_g}\bigotimes_{b\in B_c\setminus G}\ket{f_{b}}=\sum_{l_{g_1},\cdots,l_{g_{\lvert G\rvert}}=0}^{\lvert G\rvert}\sum_{l_{b'_1},\cdots,l_{b'_{\lvert B_c\setminus G\rvert}}=0}^{\lvert B_c\setminus G\rvert} \omega^{\sum_{i=1}^{\lvert G\rvert}l_{g_i}f_{g_i}+\sum_{i=1}^{{\lvert B_c\setminus G \rvert}}l_{b'_i}f_{b'_i}} \ket{l_{g_1},\cdots,l_{g_{\lvert G\rvert}}}\ket{l_{b'_1},\cdots,l_{b'_{\lvert B_c\setminus G\rvert}}}.
\end{equation}
To introduce a short-hand notation, instead of $\sum_{l_{g_1},\cdots,l_{g_{\lvert G \rvert}}=0}^{\lvert G \rvert}$ we are going to write $\sum_{l_g}$. Similarly instead $\sum_{l_{b'_1},\cdots,l_{b'_{\lvert B_c\setminus G \rvert}}=0}^{\lvert B_c\setminus G\rvert}$ we will write $\sum_{l_b}$. After these clarifications, we have to proceed with the application of the corresponding controlled-Z operations. For this, we have to keep in mind the following:
\begin{equation}
    \prod_{b,b'\in B_c}CZ^{\Gamma_{bb'}} = \prod_{g\in G, b \in B_c\setminus G}CZ^{\Gamma_{gb}}.
\end{equation}
 Therefore we have that:
\begin{equation}
    \begin{aligned}
        & \left(\prod_{g\in G, b \in B_c\setminus G}CZ^{\Gamma_{gb}}\right)H^{\otimes G}H^{\otimes B_c\setminus G} \bigotimes_{g\in G}\ket{f_g}\bigotimes_{b\in B_c\setminus G}\ket{f_{b}} \\
        &= \sum_{l_g}\sum_{l_b} \Omega\left(\sum_{i=1}^{\lvert G \rvert}l_{g_i}f_{g_i}+\sum_{i=1}^{\lvert B_c\setminus G\rvert }l_{b'_i}f_{b'_i} + \sum_{i=1}^{\lvert G \rvert }\sum_{j=1}^{\lvert B_c\setminus G\rvert }l_{g_{i}}l_{b_{j}}\Gamma_{g_ib'_j}\right)
        \ket{l_{g_1},\cdots,l_{g_{\lvert G \rvert }}}\ket{l_{b'_1},\cdots,l_{b'_{\lvert B_c\setminus G\lvert }}},
    \end{aligned}
\end{equation}
where $\Omega(x)= \omega^x$ for $x\in\mathbb{R}$. At this stage, the fact that the subgraph $B_c$ is two-colorable and that without loss of generality it is assumed that $\lvert G \lvert \leq \rvert B_c\setminus G\rvert $ is crucial. In the same way, as we did for the two colorable cases, instead of applying $H^{\dagger\otimes B_c\setminus G}H^{\dagger\otimes G}$, we apply only the $H^{\dagger\otimes B_c\setminus G}$. This can be done trivially since $H^{\otimes G}H^{\dagger\otimes B_c\setminus G}H^{\dagger\otimes G} = H^{\dagger\otimes B_c\setminus G}$. Therefore we have that:
\begin{align*}
 &H^{\dagger\otimes B_c\setminus G}\left(\prod_{g \in G, b \in B_c\setminus G} CZ^{\Gamma_{gb}}\right) H^{\otimes G} H^{\otimes B_c\setminus G} 
    \bigotimes_{g \in G} \ket{f_g} \bigotimes_{b \in B_c\setminus G} \ket{f_b} \\
    &= \sum_{l_g} \sum_{l_b} \sum_{m_b}
    \Omega\left(\sum_{i=1}^{\lvert G \rvert } l_{g_i} f_{g_i} + \sum_{i=1}^{\lvert B_c\setminus G \rvert } l_{b'_i} f_{b'_i}+\sum_{i=1}^{\lvert G \rvert } \sum_{j=1}^{\lvert B_c\setminus G \rvert } l_{g_i} l_{b'_j} \Gamma_{g_i b'_j} -\sum_{i=1}^{\lvert B_c\setminus G\rvert } l_{b'_i} m_{b'_i} \right)
    \ket{l_{g_1}, \cdots, l_{g_{\lvert G \rvert }}} \ket{m_{b'_1}, \cdots, m_{b'_{\lvert B_c\setminus G \rvert }}},
\end{align*}
where the summation $\sum_{m_b}$ denotes the following summation: $\sum_{m_{b'_1},\cdots,m_{b'_{\lvert B_c\setminus G \rvert }}=0}^{\lvert B_c\setminus G \rvert }$. Our goal is to use the property of the Kronecker delta that is given in equation (\ref{eq:delta-prop}) using as summation the sums of the $l_b$ indices since they are not a part of the ket in the above equation. For this reason, we have to define the quantity $p_a$ as:
\begin{equation}
    \label{eq:pdef}
    p_a = f_{b'_a}+\sum_{i=1}^{\lvert G \rvert }l_{g_i}\Gamma_{b'_ag_i},
\end{equation}
which underlines the structure of the theorized graph. Initially, the letter $a$ in subscript denotes a blue particle indicated with the index $a$ following the philosophy of the equation (\ref{eq:appen_fk-def}). Subsequently, the second term in the equation (\ref{eq:pdef}) represents the additional connections we imposed to form the $B_c$ graph and therefore represents the connections between the blue particles belonging to the set $B_c\setminus G$ and the green particles which belong to the set $G$. Thus, taking into consideration equation (\ref{eq:pdef}):
\begin{equation}
    \begin{aligned}
       & H^{\dagger\otimes B_c\setminus G}\left(\prod_{g \in G, b \in B_c\setminus G} CZ^{\Gamma_{gb}}\right) H^{\otimes G} H^{\otimes B_c\setminus G} 
        \bigotimes_{g \in G} \ket{f_g} \bigotimes_{b \in B_c\setminus G} \ket{f_b} \\
        &= \sum_{l_g} \sum_{m_b} \delta_{p_1,m_1}\cdots\delta_{p_{\lvert B_c\setminus G\rvert},m_{\lvert B_c\setminus G\rvert}}\times\Omega\left(\sum_{i=1}^{\lvert G\rvert}l_{g_i}f_{g_i}\right) \ket{l_{g_1},\cdots,l_{g_{\lvert G \rvert}}}\ket{m_{b_1},\cdots,m_{b_{\lvert B_c\setminus G\rvert}}},
    \end{aligned}
\end{equation}
and therefore we obtain that:
\begin{equation}
    \label{eq:3colcf1}
    H^{\dagger\otimes B_c\setminus G} \left(\prod_{g \in G, b \in B_c\setminus G} CZ^{\Gamma_{gb}}\right) H^{\otimes G} H^{\otimes B_c\setminus G} 
    \bigotimes_{g \in G} \ket{f_g} \bigotimes_{b \in B_c\setminus G} \ket{f_b}= \sum_{l_g} \Omega\left(\sum_{i=1}^{\lvert G \rvert} l_{g_i} f_{g_i}\right)\ket{l_{g_1}, \cdots, l_{g_{\lvert G \rvert}}} \ket{p_{g_1}, \cdots, p_{g_{\lvert G \lvert}}}.
\end{equation}
A careful inspection of the equation (\ref{eq:3colcf1}) leads us to write it as:
\begin{equation}
    \label{eq:3colcf1-2}
    \begin{aligned}
        & H^{\dagger\otimes B_c\setminus G} \left(\prod_{g \in G, b \in B_c\setminus G} CZ^{\Gamma_{gb}}\right) H^{\otimes G} H^{\otimes B_c\setminus G} 
        \bigotimes_{g \in G} \ket{f_g} \bigotimes_{b \in B_c\setminus G} \ket{f_b} \\
        &= \left(\bigotimes_{i=1}^{n_G}Z_{g_i}^{f_{g_i}}\right)\left(\bigotimes_{j=1}^{n_{B_c\setminus G}}X_{b_j}^{f_{b_j}}\right)\sum_{l_g} \ket{l_{g_1}, \cdots, l_{g_{n_G}}} \bigotimes_{m=1}^{n_{B_c\setminus G}}\ket{\sum_{k=1}^{n_G}l_{g_k}\Gamma_{b_mg_k}}.
    \end{aligned}
\end{equation}
It must be underlined that the notation of cardinality of the sets is switched to the one we adopted at the beginning. To introduce a more effective notation in our equations, let us define the operator $\Delta$:
\begin{equation}
\label{eq:Delta-Def}
    \Delta = \left(\bigotimes_{i=1}^{n_G}Z_{g_i}^{f_{g_i}}\right)\left(\bigotimes_{j=1}^{n_{B_c\setminus G}}X_{b_j}^{f_{b_j}}\right).
\end{equation}
Likewise the proof for proposition \ref{thm-2col}, we will define a row vector $\vec{g}=(g_1,\cdots,g_{n_G})$ and a matrix $\mathcal{G}_{B_c\setminus G}=\begin{bmatrix} \mathbb{I}_{n_R} & | & A_{GB_c\setminus G}^T\end{bmatrix}$, where $A_{GB_c\setminus G}$ is the corresponding block of the matrix (\ref{eq:gamma-3col-studied}). Therefore, the remaining part in the parenthesis of equation (\ref{eq:appen:proof-1}) is calculated and we arrive at:
\begin{equation}
\label{eq:3colfinalresult}
    H^{\dagger\otimes B_c} H^{\dagger\otimes B_u} \ket{\psi_{\text{3-color}}} =\sum_{u_1, \ldots, u_{n_R}=0}^{d-1} \ket{\vec{u}\mathcal{G}_{B_uR}}\Delta\left(\sum_{g_1,\cdots,g_{n_G}=0}^{d-1}\ket{\vec{g}\mathcal{G}_{B_c\setminus G}}\right),
\end{equation}
which concludes our proof.
\end{proof}

\section{One dimensional cluster states and circles as the $B_c$ subgraph}\label{appen:1d-cluster}
The goal of this subsection is to study $B_c$ graphs with specific structures, that can lead to interesting cases of three-colorable graphs. Let us start by assuming that the $B_c$ graph is a one-dimensional cluster state with an arbitrary number of particles, denoted as $k$. To better illustrate it, figure \ref{fig:chain_example_graph} is such an example with $k=9$ particles. We are going to derive our closed-form formula stated in the proposition \ref{thm-3col}, but not starting entirely from the beginning, but rather from the action of the $\mathcal{O}_{b,b'\in B_c}$ on an arbitrary ket $\ket{l_1,\cdots,l_k}$. 
\begin{figure}
    \centering
    \includegraphics[width=0.6\linewidth]{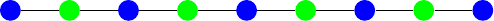}
    \caption{A one-dimensional cluster state with nine vertices as the $B_c$ subgraph.}
    \label{fig:chain_example_graph}
\end{figure}
Let us initiate the analysis by implementing the equation (\ref{eq:o-operator-definition}) on this context:
\begin{align}
    \mathcal{O}_{x, x' \in B_c} = \left( \prod_{x=1}^{k-1} H^{\dagger}_{x_j} \right)\left( \prod_{x=1}^{k-1} C_{x_j}Z_{x_{j+1}}^{\Gamma_{x_jx_{j+1}}} \right)\left( \prod_{x=1}^{k} H_{x_j} \right).
\end{align}
Therefore the action of the above operator on an arbitrary state $\ket{i_1,\cdots,i_k}$ is:

    \begin{equation}
     \left( \prod_{k=1}^{k-1} C_{x_j}Z_{x_{j+1}}^{\Gamma_{x_jx_{j+1}}} \right)\left( \prod_{k=1}^{k} H_{x_j} \right)\ket{i_1,\cdots,i_k}=\sum_{l_1,\cdots,l_k=0}^{d-1}\Omega\left(\sum_{j=1}^{k}l_ji_j+\sum_{j=1}^{k-1}l_jl_{j+1}\Gamma_{x_jx_{j+1}}\right)\ket{l_1,\cdots,l_k}.
    \end{equation}

At this point, it is assumed that the chain contains an odd number of terms. The steps for the even-numbered case are the same and thus are omitted. Therefore, the last particle is denoted as $x_{2j+1}$. Bearing in mind that for the $B_c$ particles $n_G\leq n_{B_c\setminus G}$, the blue particles are always of the form $x_{2j+1}$ (odd-numbered $j$), and the green particles are always of the form $x_{2j}$ (even numbered $j$). Acknowledging the discussion about the number of $H^{\dagger}$ we can apply to decrease the number of terms, only the odd-numbered $H^{\dagger}$ will be performed and therefore the following is obtained. This was explained in Appendix \ref{app:proof-3col-thm}. After a few lines of calculations, we have that:

    \begin{equation}
        \begin{aligned}
            &\left( \prod_{j=0}^{k} H^{\dagger}_{x_{2j+1}} \right)\left( \prod_{j=1}^{k-1} C_{x_j} Z_{x_{j+1}}^{\Gamma_{x_j x_{j+1}}} \right)\left( \prod_{j=1}^{k} H_{x_j} \right) \ket{i_1, \ldots, i_k} \\
            &= \sum_{l_1, \ldots, l_k=0}^{d-1} \sum_{m_1, \ldots, m_{2k+1}=0}^{d-1} \Omega\left(\sum_{j=1}^{k} l_j i_j - \sum_{j=1}^{k} m_{2j-1} l_{2j-1}+\sum_{j=1}^{k-1} l_j l_{j+1} \Gamma_{x_j x_{j+1}}\right) \ket{m_1, l_2, m_3, \ldots, m_{2k+1}}.
        \end{aligned}
    \end{equation}

Taking into account equation (\ref{eq:delta-prop}) we arrive at:
\begin{align}
\label{eq:app:c1}
     &\mathcal{O}_{x, x' \in B_c}\ket{i_1,\cdots,i_k} = \sum_{l_2,\cdots,l_{2k}=0}^{d-1}\Omega\left(\sum_{j=1}^{d-1}l_{2j}i_{2j}\right)\ket{i_1+l_2\Gamma_{x_1x_2},l_2,i_3+l_2\Gamma_{x_3x_2}+l_4\Gamma_{x_3x_4},\cdots},
\end{align}
After some algebraic manipulations, equation (\ref{eq:app:c1}) can be written with respect to the operators $X$ and $Z$:
\begin{align}
\label{eq:operatorchain}
    &\mathcal{O}_{x, x' \in B_c}\ket{i_1,\cdots,i_k} = \bigotimes_{j=0}^{k}X_{x_{2j+1}}^{i_{2j+1}}\bigotimes_{j=0}^{k}Z_{x_{2j}}^{i_{2j}}\sum_{l_2,\cdots,l_{2k}=0}^{d-1}\ket{l_2\Gamma_{x_1x_2},l_2,\cdots,l_{2k}\Gamma_{x_{2k}x_{2k+1}}}.
\end{align}
Equation (\ref{eq:operatorchain}) allows for some intriguing conclusions. If two particles are assumed to belong in the $B_c$ subgraph, then the following is obtained:
\begin{equation}
    \mathcal{O}_{x_1, x_2 \in B_c}\ket{i}_{x_1}\ket{j}_{x_2} =X_{x_1}^{i}\otimes Z_{x_2}^{j}\sum_{l=0}^{d-1}\ket{l}_{x_1}\ket{l}_{x_2},
\end{equation}
which is the Bell basis. To make it more obvious, let us assume we work with qubits and therefore $d=2$:
\begin{equation}
    \mathcal{O}_{x_1, x_2 \in B_c}\ket{i,j} = \ket{i,0} + (-1)^j\ket{i+1,1}.
\end{equation}
This means that the following correspondence is obtained:
\begin{equation}
\begin{aligned}
        \ket{00} &\rightarrow \ket{00} + \ket{11} = \ket{\Phi^{+}} \\
        \ket{01} &\rightarrow \ket{00} - \ket{11} = \ket{\Phi^{-}} \\
        \ket{10} &\rightarrow \ket{10} + \ket{01} = \ket{\Psi^{+}} \\
        \ket{11} &\rightarrow -\ket{10} + \ket{01} = -\ket{\Psi^{-}},
    \end{aligned}
\end{equation}
where $\Phi^{\pm}$ and $\Psi^{\pm}$ are the four Bell states for the qubit case. Similarly, for three particles belonging to the  $B_c$ set, the GHZ state is obtained:
\begin{equation}
    \mathcal{O}_{x_1, x_2, x_3 \in B_c}\ket{i_1,i_2,i_3} = X_{x_1}^{i_1}\otimes Z_{x_2}^{i_2}\otimes X_{x_3}^{i_3}\sum_{l=0}^{d-1}\ket{l}_{x_1}\ket{l}_{x_2}\ket{l}_{x_3}.
\end{equation}
It must be noted that for simplicity in the above example, the values of the adjacency matrix elements are assumed to be one. This is the case as well for the upcoming example where the result for 4 particles is presented:
\begin{align}
    \mathcal{O}_{b, b' \in B_c}\ket{i_1,i_2,i_3,i_4} 
    &= X_{b_1}^{i_1}\otimes Z_{b_2}^{i_2}\otimes X_{b_3}^{i_3}\otimes Z_{b_4}^{i_4} \sum_{l_2,l_4=0}^{d-1}\ket{l_2,l_2,l_2+l_4,l_4}.
\end{align}
This is a  rather interesting result, especially if we realize that we can manipulate the previous equation and arrive at:
\begin{align*}
&H_{b_3}H_{b_4}^{\dagger}\mathcal{O}_{b, b' \in B_c}\ket{i_1,i_2,i_3,i_4}=X_{b_1}^{i_1}\otimes Z_{b_2}^{i_2}\otimes Z_{b_3}^{i_3}\otimes X_{b_b}^{i_4}\sum_{l,m=0}^{d-1}\omega^{lm}\ket{l,l}\otimes\ket{m,m}.
\end{align*}
This result can be generalized for any number of particles and implies two facts. The first is, that indeed the number of $H^{\dagger}$ operations one should apply for the final step of the calculation of the operator defined in general in equation (\ref{eq:o-operator-definition}) remains $H^{\dagger\otimes B_c\setminus G}$. The second fact is, that it is possible to write part, if not the whole ket inside the \ref{eq:operatorchain} as a tensor product having combinations of $\ket{l,l}$ or $\ket{l,l,l}$. This is rather important for the discussion of the equivalence using local operators between the state $\ket{\psi_{\text{2-color}}}$ and the state $\ket{\psi_{\text{3-color}}}$.

At this point, it has to be noted that in the given examples the value of the adjacency matrix is assumed to be one. We recall that every possible value inside the kets belongs to $\mathbb{F}_d$ which is also true for every possible value of the adjacency matrix. It is a well-known mathematical result, that the multiplication of any element belonging to $\mathbb{F}_d$ with any other element belonging to the same field maps to an element in $\mathbb{F}_d$. To grasp this idea, one can consider the simplest qudit case, the qutrit ($d=3$). For this, assume that every value is $\Gamma = 2$. The possible values inside the brackets are 0, 1, or 2. By multiplying every element with 2, we get, by keeping the order, 0,2, and 1, and since the order does not matter, we have every number only once. This allows us to understand that showing that Bell or GHZ state is obtained assuming every adjacency matrix to be 1 can be generalized for any combination of values for the adjacency matrices.

Let us proceed with the $B_c$ particles creating a circle with an even number of particles to create a two-colorable subgraph. Equipped with the previously presented results, it is easy to conclude that if the $B_c$ graph is an even-numbered circle then the action of the operator  $\mathcal{O}_{b,b'\in B_c}$ is given by the following equation:
\begin{align*}
     &\mathcal{O}_{b, b' \in B_c}\ket{c_1,\cdots,c_{2N}} = \bigotimes_{j=0}^{k}X_{b_{2j+1}}^{c_{2j+1}}\bigotimes_{j=0}^{k}Z_{b_{2j}}^{c_{2j}}\sum_{l_2,\cdots,l_{2N}=0}^{d-1}\ket{l_2\Gamma_{b_1b2}+l_{2N}\Gamma_{b_1b_{2N}},l_2,\cdots}.
\end{align*}
This result is connected with the obtained in the equation (\ref{eq:operatorchain}), which is expected, since the only difference between the first and the second setup is that the first and the last particles are connected in the latter case.

\end{document}